\theoremstyle{plain}
\newtheorem{theorem}{Theorem}
\newtheorem{lemma}[theorem]{Lemma}
\newtheorem{corollary}[theorem]{Corollary}
\newtheorem{proposition}[theorem]{Proposition}
\theoremstyle{definition}
\newtheorem{definition}[theorem]{Definition}
\newtheorem{example}[theorem]{Example}
\theoremstyle{remark}
\newtheorem{remark}[theorem]{Remark}
\newtheorem{assumption}[theorem]{Assumption}
\newcommand{\N}{\mathbb{N}}
\newcommand{\R}{\mathbb{R}}
\title{\bf Comparing the Switch and Curveball Markov Chains for Sampling Binary Matrices with Fixed Marginals}
\author{Corrie Jacobien Carstens \\ %\thanks{Supported by NASA grant ABC123.}\\
\small Korteweg-de Vries Institute for Mathematics\\[-0.8ex]
\small  University of Amsterdam\\[-0.8ex] 
\small Amsterdam, The Netherlands\\
\small\tt c.j.carstens@uva.nl\\
\and
Pieter Kleer \\ %\quad\\
%\small Networks and Optimization Group\\[-0.8ex]
\small Centrum Wiskunde \& Informatica (CWI)\\[-0.8ex]
\small Amsterdam, The Netherlands\\
%\small\tt \{fsa,fta\}@uwn.edu.ao
\small\tt kleer@cwi.nl
}
\begin{document}

\maketitle

\begin{abstract}
The Curveball algorithm is a variation on well-known switch-based Markov Chain Monte Carlo approaches for the uniform sampling of binary matrices with fixed row and column sums. %Instead of a switch, the Curveball algorithm performs a so-called binomial trade  in every iteration of the algorithm, as this intuitively could lead to a better convergence rate for reaching the stationary (uniform) distribution in certain cases. 
We give a spectral gap comparison between switch chains and the Curveball chain using a decomposition of the switch chain based on Johnson graphs. In particular, this comparison allows us to prove that the Curveball Markov chain is rapidly mixing whenever one of the switch chains is rapidly mixing. 
As a by-product of our analysis, we show that the switch Markov chain of the Kannan-Tetali-Vempala conjecture  only has non-negative eigenvalues if the sampled binary matrices have at least three columns. This shows that the Markov chain does not have to be made lazy, which is of independent interest. %We also obtain an improved bound on the smallest eigenvalue for the switch Markov chain studied by Greenhill for the uniform sampling of simple directed regular graphs.

  % keywords are optional
  \bigskip\noindent \textbf{Keywords:} Binary matrices; Curveball; switch; positive semidefinite; state space decomposition
\end{abstract}

\section{Introduction}
The problem of uniformly sampling binary matrices with fixed row and column sums (marginals) has received a lot of attention, see, e.g., \cite{Rao1996,Kannan1999,Erdos2013,Erdos2015,Erdos2016}. %, see also \cite{Rao1996} for some applications.  
Equivalent formulations for this problem are the uniform sampling of undirected bipartite graphs, or the uniform sampling of directed graphs with possible a self-loop at every node (but no parallel edges).
One approach is to define a Markov chain on the space of all binary matrices for given fixed row and column sums, and study a random walk on this space induced by making small changes to a matrix using a given probabilisitic procedure (that defines the transition matrix). The idea, roughly speaking, is that after a sufficient amount of time, the so-called \emph{mixing time}, the resulting matrix almost corresponds to a sample from the uniform distribution over all binary matrices with  given row and column sums. The most well-known probabilistic procedures for making these small changes use so-called switches, see, e.g., \cite{Rao1996}. 
More recently the Curveball algorithm was introduced in some experimental papers
\cite{Verhelst2008,Strona2014}, which is a procedure that intuitively speeds up the mixing time of switch-based chains in many settings. %\footnote{The formulation of Strona et al.  \cite{Strona2014} is different from the one used in this work. Carstens \cite{Carstens2015} observed that the formulation as studied here is actually the one that Strona et al. \cite{Strona2014} used for their experiments. The results in this work are not true for the version introduced in the main text of \cite{Strona2014}.}  
The goal of this paper is to confirm this intuition by giving a spectral gap comparison for the Markov chains of the classical switch algorithm of Kannan, Tetali and Vempala \cite{Kannan1999} and the Curveball algorithm as formulated by Verhelst \cite{Verhelst2008}. %We also describe a more general comparison idea for which we also give some additional applications. 
We will start with an informal description of both algorithms.

For a given initial binary matrix $A$, in every step of the switch algorithm we randomly choose two distinct rows and two distinct columns uniformly at random. If the $2 \times 2$ submatrix corresponding to these rows and columns is a \emph{checkerboard} $C_i$ for $i = 1,2$, where, 
$$
C_1 = \left(\begin{matrix}
1 & 0 \\ 0 & 1
\end{matrix} \right) \ \ \ \ \text{ and }  \ \ \ \ 
C_2 = \left(\begin{matrix}
0 & 1 \\ 1 & 0
\end{matrix} \right),
$$
then the $2 \times 2$ submatrix is replaced by $C_{i+1}$ for $i$ modulo $2$. That is, if the checkerboard is $C_1$, it is replaced by $C_2$, and vice versa. If the submatrix does not correspond to a checkerboard, nothing is changed. Such an operation is called a \emph{switch}. 

%The second version of the switch algorithm is the so-called \emph{edge-switch algorithm}. In every step of the algorithm, two matrix-entries $(i,a)$ and $(j,b)$ from the set $\{(c,d) : A(c,d) = 1\}$ are chosen uniformly at random.\footnote{We refer to it as the edge-switch algorithm, as for the interpretation of uniformly sampling directed graphs (where every node can have at most one self-loop), it corresponds to choosing two distinct edges uniformly at random.} If the $2 \times 2$ submatrix corresponding to rows $i,j$ and columns $a,b$ forms a checkerboard, the same operation as in the KTV-switch algorithm is performed. Otherwise, the matrix is not adjusted. 

The Curveball algorithm intuitively speeds up the switch algorithm. In every step of the algorithm, first two rows are chosen uniformly at random from $A$ as in the switch algorithm. Then, a so-called \emph{binomial trade} is performed. In such a trade, we first look at all the columns in the $2 \times n$ submatrix given by the chosen rows, and we identify all the columns for which the column sum, in this submatrix, is one. That is, the column consist of precisely one $1$ and one $0$. For example if the $2 \times 6$ submatrix (i.e., $n = 6$) is given by
$$
\begin{pmatrix}
1 & \mathbf{1} & 0 & \mathbf{0} & \mathbf{0} & \mathbf{1} \\ 1 & \mathbf{0} & 0 & \mathbf{1} & \mathbf{1} & \mathbf{0}
\end{pmatrix} ,
$$
then we consider the (auxiliary) submatrix 
$$
\left(\begin{matrix}
1 & 0 & 0 & 1 \\ 0 & 1 & 1 & 0
\end{matrix} \right)
$$
given by the second, fourth, fifth and sixth column. Let $u$ and $l$ respectively be the number of columns where the $1$ appears on the upper row and the lower row ($u = l = 2$ here)	. We now uniformly at random draw a $2 \times (u+l)$ matrix with columns sums equal to $1$, and row sums equal to $u$ and $l$. Note that there are $\binom{u+l}{u}$ possible choices, hence the name binomial trade. We then replace the (auxiliary) submatrix with this new submatrix in $A$. Note that such a drawing can be obtained by uniformly choosing $u$ out of $u+l$ column indices.% which will be the positions for which the top row will get the $1$ assigned to it. 

%For example, the (auxiliary) submatrix above could be replaced by 
%$$
%\left(\begin{array}{cccc}
%0 & 1 & 1 & 0 \\ 1 & 0 & 0 & 1
%\end{array} \right),
%$$
%and then the new $2 \times n$ submatrix is
%$$
%\left(\begin{matrix}
%1 & \mathbf{0} & 0 & \mathbf{1} & \mathbf{1} & \mathbf{0} \\ 1 & \mathbf{1} & 0 & \mathbf{0} & \mathbf{0} & \mathbf{1}
%\end{matrix} \right).
%$$
%Note that the resulting matrix (in this particular example) cannot be reached by one switch operation.

%The motivation for the Curveball algorithm comes from the fact that a binomial trade can reach more matrices in one step of the algorithm than the switch algorithm. Another interpretation is that the Curveball Markov chain ``locally" looks more uniform than the switch Markov chain, and therefore should converge faster to the uniform stationary distribution. One should note that the Curveball algorithm comes with an increased complexity for performing one transition of the Markov chain, as compared to the switch algorithm. Experiments have indicated that overall the Curveball algorithm might perform better than the switch algorithm, see, e.g., Strona et al. \cite{Strona2014}.  In this work we give a spectral gap comparison for the Markov chains of the switch and Curveball algorithms. In order to fully compare the two algorithms, one of course also has to take into account the increase in computational complexity for performing one step of Curveball algorithm as opposed to the switch algorithm. In this work, we omit this comparison and focus solely on a comparison of the corresponding Markov chains.

Both these algorithms define a Markov chain on the set of all $m \times n$ binary matrices satisfying given row and columns sums $r$ and $c$. The main result of this work is a comparison of their relaxation times, or, equivalently, spectral gaps (see next section for definitions).
 %The relaxation time is the inverse of the spectral gap of the transition matrix of a Markov chain, and essentially determines its mixing time.
\begin{theorem}[Relaxation time comparison]\label{thm:relax_comparison}
Let $(1 - \lambda_*^c)^{-1}$ and $(1 - \lambda_*^s)^{-1}$,  be the relaxation times of the Curveball and switch Markov chains respectively. Then, with $r_{\max}$ the maximum row sum,
$$
\frac{2}{n(n-1)}\cdot (1 - \lambda_*^s)^{-1} \ \leq \ (1 - \lambda_*^c)^{-1} \ \leq \ \min\left\{1, \frac{(2r_{\max} + 1)^2}{2n(n-1)}\right\} \cdot (1 - \lambda_*^s)^{-1}.
$$
\end{theorem}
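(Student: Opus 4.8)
The plan is to realise both chains as averages, over the uniformly chosen pair of rows, of a ``local'' chain acting only on that pair, and then to compare these local chains on a common refinement of the state space into Johnson graphs. Fix an unordered row pair $\{i,j\}$ and hold all other rows fixed. Neither a switch on rows $i,j$ nor a binomial trade on rows $i,j$ changes which columns are \emph{mixed} (column sum $1$ within rows $i,j$); they only redistribute the $u$ ``upper'' and $l$ ``lower'' ones among the $u+l$ mixed columns. Hence, for each $\{i,j\}$, the state space partitions into components $C$, and the reachable configurations inside a component are exactly the $\binom{u+l}{u}$ ways of choosing which mixed columns carry the upper $1$, i.e. the vertex set of the Johnson graph $J(u+l,u)$. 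A switch on a uniformly chosen column pair swaps one upper with one lower column precisely along an edge of $J(u+l,u)$, so the local switch chain traverses each such edge with probability $\tfrac{1}{\binom n2}$; a binomial trade redraws the configuration uniformly on $C$, so the local Curveball chain is the complete ``resampling'' chain on $C$. Since both full chains are reversible with respect to the uniform distribution and share the same row-pair averaging $\tfrac{1}{\binom m2}\sum_{\{i,j\}}$, and since Dirichlet forms ignore holding probabilities, the relaxation-time comparison reduces, via the Rayleigh-quotient characterisation of the spectral gap, to a two-sided comparison of the local Dirichlet forms on each component $C$.

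On a component $C$ the Curveball form equals the variance, $\mathcal{E}^{c}_C(f)=\mathrm{Var}_C(f)$, while the switch form is $\tfrac{1}{\binom n2}$ times the (unnormalised) combinatorial Dirichlet form $\mathcal{E}^{J}_C$ of $J(u+l,u)$. I would therefore read off the full spectrum of the Johnson graph: with $v=u+l$ and $k=\min(u,l)$ the adjacency eigenvalues are $(k-t)(v-k-t)-t$ for $t=0,\dots,k$, and the degree is $d=k(v-k)$. The second largest gives $d-\lambda_1=v=u+l$, and the smallest, attained at $t=k$, is $\lambda_{\min}=-k$, so $d-\lambda_{\min}=k(v-k)+k=\min(u,l)\big(\max(u,l)+1\big)$. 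Expanding $f$ in the eigenbasis yields the sandwich
$$(u+l)\,\mathrm{Var}_C(f)\ \le\ \tfrac{1}{|C|}\,\mathcal{E}^{J}_C(f)\ \le\ \min(u,l)\big(\max(u,l)+1\big)\,\mathrm{Var}_C(f),$$
which is exactly a componentwise comparison of $\mathcal{E}^c_C$ with the local switch form, with constants $\tfrac{u+l}{\binom n2}$ and $\tfrac{\min(u,l)(\max(u,l)+1)}{\binom n2}$.

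It remains to make the constants uniform and reassemble. For the left inequality of the theorem I use only $u+l\ge 1$: summing $\mathcal{E}^J_C(f)\ge(u+l)|C|\,\mathrm{Var}_C(f)\ge|C|\,\mathrm{Var}_C(f)$ over all components and row pairs gives $\mathcal{E}^c\le\binom n2\,\mathcal{E}^s$, hence $1-\lambda_*^c\le\binom n2\,(1-\lambda_*^s)$, which is the factor $\tfrac{2}{n(n-1)}$. For the right inequality I bound $\min(u,l)(\max(u,l)+1)$ from above in two ways. Using $u+l\le 2r_{\max}$ (since $u\le r_i$, $l\le r_j$) and maximising $a(b+1)$ subject to $a+b\le 2r_{\max}$ gives $\min(u,l)(\max(u,l)+1)\le\tfrac{(2r_{\max}+1)^2}{4}$, while using $u+l\le n$ gives $\min(u,l)(\max(u,l)+1)\le\binom n2$ for every $n\ge 3$. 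These yield $\mathcal{E}^c\ge\max\{1,\tfrac{2n(n-1)}{(2r_{\max}+1)^2}\}\,\mathcal{E}^s$ and therefore the claimed $\min\{1,\tfrac{(2r_{\max}+1)^2}{2n(n-1)}\}$ bound on $(1-\lambda_*^c)^{-1}$.

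The main obstacle is the first step: correctly identifying the local chains and verifying that both full chains are genuine averages of block-diagonal local chains over the \emph{same} row-pair distribution, so that the global comparison really does reduce to the componentwise one, all while keeping track of the $\tfrac{1}{\binom n2}$ normalisation that ultimately produces the $n(n-1)$ factors. The rest is essentially linear algebra, the one genuinely quantitative input being the extremal eigenvalue $\lambda_{\min}=-\min(u,l)$, which controls the upper comparison and, after the elementary optimisation under $u+l\le 2r_{\max}$, is precisely what turns into $(2r_{\max}+1)^2$.
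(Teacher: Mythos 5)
Your decomposition and quantitative input are essentially those of the paper: both proofs split each chain over the uniformly chosen row pair into blocks that are Johnson graphs $J(u+l,u)$ (the binomial neighbourhoods) and reduce the global comparison to a blockwise spectral one; the Dirichlet-form language you use is, as the paper's Appendix~\ref{app:dirichlet} notes, equivalent to its positive-semidefiniteness framework since both stationary distributions are uniform. Your one genuine refinement is to use the exact extremal Laplacian eigenvalues of $J(u+l,u)$, namely $d-\lambda_1=u+l$ and $d-\lambda_{\min}=\min(u,l)(\max(u,l)+1)$, in place of the cruder quadratic bound $(u+l+1)^2/4$ from Proposition~\ref{prop:johnson}; this makes the estimate $\min(u,l)(\max(u,l)+1)\le\binom{n}{2}$ valid for all $n\ge3$ in one stroke, whereas the paper's bound fails for $n=3,4$ and those cases are ``left to the reader''. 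Your constants check out: $u\le r_i$ and $l\le r_j$ give $u+l\le 2r_{\max}$, hence $\min(u,l)(\max(u,l)+1)\le(2r_{\max}+1)^2/4$, and the normalisations $\binom{m}{2}^{-1}$ and $\binom{n}{2}^{-1}$ are tracked correctly, matching the paper's three choices of $(\alpha,\beta)$.

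There is, however, one gap to close. The theorem is stated for relaxation times $(1-\lambda_*)^{-1}$ with $\lambda_*=\max\{\lambda_1,|\lambda_{N-1}|\}$, while the Rayleigh-quotient comparison you invoke controls only $1-\lambda_1$. For the right-hand inequality this is harmless: $P_c$ is a non-negative combination of the positive semidefinite matrices $|\mathcal{N}|^{-1}J_{\mathcal{N}}$, so $\lambda_*^c=\lambda_1^c$, and $1-\lambda_*^s\le 1-\lambda_1^s$ points the right way. But for the left-hand inequality you need $\lambda_*^s=\lambda_1^s$, i.e.\ that the switch chain has no eigenvalue below $-\lambda_1^s$; otherwise $1-\lambda_1^c\le\binom{n}{2}(1-\lambda_1^s)$ does not imply $1-\lambda_*^c\le\binom{n}{2}(1-\lambda_*^s)$. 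This is precisely the ``non-negative spectrum for $n\ge3$'' by-product the paper advertises, and your own blockwise bound already delivers it: $L_{\mathcal{N}}\preceq\binom{n}{2}\bigl(I_{\mathcal{N}}-|\mathcal{N}|^{-1}J_{\mathcal{N}}\bigr)\preceq\binom{n}{2}I_{\mathcal{N}}$ on every block, hence $I-P_s\preceq I$ and $P_s\succeq0$. Add that one line, together with the standing hypothesis $n\ge3$ (without which the claim is false, e.g.\ $J(2,1)$ gives $d-\lambda_{\min}=2>\binom{2}{2}$), and the argument is complete.
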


\noindent We present a more general comparison framework inspired by, and based on, the notion of a heat-bath Markov chain as introduced by Dyer, Greenhill and Ullrich \cite{Dyer2014}. We prove Theorem \ref{thm:relax_comparison} as an application of this framework in the the more general setting where the binary matrices can also have \emph{forbidden entries} that must be zero. This allows us to also compare the chains for the sampling of a simple directed graph with given degree sequence, as its adjacency matrix can be modeled by a square binary matrix with zeros on the diagonal. %We also give some additional applications of the general framework afterwards.
%We conclude this section with related work.

\subsection{Related work}
Before going into related work, we would also like to refer the reader to \cite{Erdos2016} for a nice exposition on related work concerning the switch Markov chain.
Kannan, Tetali and Vempala \cite{Kannan1999} conjectured that the KTV-switch chain is rapidly mixing for all fixed row and column sums. %\footnote{All papers mentioned here formulate the probem in terms of bipartite graphs with fixed degree sequences. The correspondence between binary matrices and bipartite graphs is as follows. For a bipartite graph $G = (L \cup R, E)$, with $|L| = m$ and $|R| = n$, we create an $m \times n$-matrix where the rows are indexed by the elements of $L$, the columns by the elements of $R$, and an entry $(i,j)$ is $1$ if and only if $\{i,j\} \in E$ (and $0$ otherwise). The degree sequence of $G$ defines the row and columns sums.} 
Mikl\' os, Erd\H{o}s  and Soukup \cite{Erdos2013} proved the conjecture for half-regular binary matrices, in which all the row sums are equal (or all column sums), and Erd\H{o}s, Kiss, Mikl\' os and Soukup \cite{Erdos2015}  extended this result to almost half-regular marginals. The authors prove this in a slightly more general context where there might be certain forbidden edge sets.
The Curveball algorithm was first described by Verhelst \cite{Verhelst2008} and a slightly different version was later independently formulated by Strona, Nappo, Boccacci, Fattorini and San-Miguel-Ayanz \cite{Strona2014}. The name Curveball algorithm was introduced in \cite{Strona2014}. Theorem \ref{thm:relax_comparison} directly implies that the Curveball Markov chain is also rapidly mixing for (almost) half-regular marginals.

For the uniform sampling of simple directed graphs with a given degree sequence, the most used switch algorithm is the \emph{edge-switch} version,\footnote{We will address this version as well.} see Greenhill \cite{Greenhill2011}, who gives a polynomial upper bound on the mixing time for the case of $d$-regular directed graphs, and Greenhill and Sfragara \cite{Greenhill2017} for some recent results on certain irregular degree sequences. %For regular instances, the switch Markov chain is always irreducible when $n \geq 4$ and $d \geq 1$, as was shown by Greenhill \cite{Greenhill2011}. In general this is not true. 
The latter paper \cite{Greenhill2017} only considers degree sequences for which the edge-switch Markov chain is irreducible for a given degree sequence. The Curveball chain has also been formulated for (un)directed graphs, see Carstens, Berger and Strona \cite{Carstens2016}. A theoretical analysis for the mixing time of the Curveball Markov chain was raised as an open problem there. %There are various ways to overcome the problem of reducibility, e.g., by introducing certain hexagonal switches, see, e.g., \cite{Erdos2015}, where the authors prove rapid mixing of certain Markov chains that also use hexagonal switches (we refer the reader to \cite{Erdos2015} for details). Another way to deal with irreducibility is to include a pre-processing step, see, e.g., Berger and M\"uller-Hannemann \cite{Berger2010}. %In this work we only focus on degree sequences for which the corresponding edge-switch chain is irreducible. % This is in particular true for regular instances with $n \geq 4$, as was shown by Greenhill \cite{Greenhill2011}. %Our comparisons between switch and Curveball chains should also naturally  extend to the case in which one first performs a pre-processing step, as in \cite{Berger2010}. 

% We do not consider the uniform sampling of general undirected graphs here (see, e.g., some references in \cite{Greenhill2017} for related literature).

All the results regarding rapid mixing mentioned above rely on the multi-commodity flow method developed by Sinclair \cite{Sinclair1992}. In this work we omit multi-commodity flow techniques in order to compare the switch and Curveball Markov chains, but rather take a more elementary approach based on comparing eigenvalues of transition matrices. %An interesting open problem is to see if multi-commodity flow arguments can be found for the Curveball chain, that provide significant improvements over the best-known bounds on the mixing time of one of the switch chains.  
One seeming advantage of the eigenvalue comparison is that it allows us to compare the switch and Curveball chains for arbitrary fixed row and column sums. %Multi-commodity flow arguments are only known for the special cases given above.

Our spectral gap comparisons are special cases of the classical comparison framework developed largely by Diaconis and Saloff-Coste and is based on so-called Dirichlet form comparisons of Markov chains, see, e.g., \cite{Diaconis1993b,Diaconis1993a}, and also Quastel \cite{Quastel1992}.  See also the expository paper by Dyer, Goldberg, Jerrum and Martin \cite{Dyer2006}.  As the stationary distributions are the same for all our Markov chains, we use a more direct, but equivalent, framework based on positive semidefiniteness. We briefly elaborate on this in Appendix \ref{app:dirichlet} for the interested reader. 

%Also, various works study the mixing time of a Markov chain based on a decomposition of the state space. In particular, Madras and Randall \cite{Madras2002} proposed a framework in which, very roughly speaking, the state space graph is decomposed into smaller pieces on which the mixing time behaves well. If the chain also easily moves between all these pieces, then the mixing time of the whole chain can be analyzed as well. Our decomposition of the state space graph is of a different conceptual nature. Its purpose is to allow a comparison between different chains, rather than improve the mixing time analysis of a single chain.
%The state space graph decompositions that we use are closely related to the notion of heat-bath Markov chains as introduced by Dyer, Greenhill and Ullrich \cite{Dyer2014}. The authors use the concept of heat-bath chains to show that certain Markov chains only have non-negative eigenvalues. 
The transition matrix of the Curveball Markov chain is a special case of a heat-bath Markov chain, as introduced by Dyer, Greenhill and Ullrich \cite{Dyer2014}. Our work partially builds on \cite{Dyer2014} in the sense that we compare a Markov chain, with a similar decomposition property as in the definition of a heat-bath chain, to its heat-bath variant. We explain these ideas in the next section. %The proof of our Lemma \ref{lem:curv} is then essentially a special case of Lemma 1.2 \cite{Dyer2014}.  %The transition matrix decomposition for the switch Markov chain cannot be formulated as a heat-bath Markov chain, although the decomposition idea is similar. %\footnote{For the reader familiar with \cite{Dyer2014}: this can, e.g., be seen from the fact that $\mathcal{S}_{\mathcal{N}}$ as in (\ref{eq:swap_S}) is in general not idempotent.}	
%We elaborate on the connection to heat-bath chains in section \ref{sec:general}.

\section{General framework}\label{sec:general}
We consider an ergodic Markov chain $\mathcal{M} = (\Omega,P)$ with stationary distribution $\pi$, being strictly positive for all $x \in \Omega$,  that is of the form\footnote{This description is almost the same as that of a heat-bath chain \cite{Dyer2014}, and is introduced to illustrate the conceptual idea.}
\begin{equation}\label{eq:general_chain}
P = \sum_{a \in \mathcal{L}} \rho(a) \sum_{R \in \mathcal{R}_a} P_{R}
\end{equation}
which is given by a
\begin{enumerate}[i)]
\item finite index set $\mathcal{L}$,
and probability distribution $\rho$ over $\mathcal{L}$,
\item partition $\mathcal{R}_a = \cup R_{k,a}$ of $\Omega$ for $a \in \mathcal{L}$.
\end{enumerate}
Moreover, the restriction of a matrix $P_R$ to the rows and columns of $R = R_{k,a}$ defines the transition matrix of an ergodic, time-reversible Markov chain on $R$ (and is zero elsewhere), with stationary distribution 
$$
\tilde{\pi}_R(x) = \frac{\pi(x)}{\pi(R)}
$$ 
for $x \in R$. We use $1 = \lambda_0^R \geq  \lambda_1^R \geq \dots \geq \lambda_{|R|-1}^R$ to denotes its eigenvalues. Note that these are also eigenvalues of $P_R$ and that all other eigenvalues of $P_R$ are zeros (as all rows and columns not corresponding to elements in $R$ only contain zeros). We use $\mathcal{R}$ to denote the multi-set $\cup_a \mathcal{R}_a$ indexed by pairs $(k,a)$.  
Note that the chain $\mathcal{M}$ proceeds by drawing an index $a$ from the set $\mathcal{L}$, and then performs a transition in the Markov chain on the set $R$ that the current state is in.

 The \emph{heat-bath variant} $\mathcal{M}_{heat}$ of the chain $\mathcal{M}$ is given by the transition matrix 
\begin{equation}\label{eq:heat_chain}
P_{heat} = \sum_{a \in \mathcal{L}} \rho(a) \sum_{R \in \mathcal{R}_a} \mathbf{1}\cdot \sigma_R
\end{equation}
with $\sigma_R$ is a row-vector  given by $\sigma_R(x) = \tilde{\pi}_R(x)$ if $x \in R$ and zero otherwise, and $\mathbf{1}$ the all-ones column vector. %Intuitively, this could speed up the mixing time  of the original chain, in the following sense (see end of section for mixing time background). 
It can be shown that $\mathcal{M}_{heat}$ is an ergodic Markov chain as well. It is reversible by construction \cite{Dyer2014}.\footnote{The Curveball chain is the heat-bath variant of the KTV-switch chain as we will later prove.}

\begin{theorem}\label{thm:heat_comparison}
Let $\mathcal{M}$ be a Markov chain as in (\ref{eq:general_chain}), and $\mathcal{M}_{heat}$ its heat-bath variant as in (\ref{eq:heat_chain}). If $\alpha$ and  $\beta$ are non-zero constants, with $\alpha\cdot \beta > 0$, such that
\begin{equation}
\label{eq:assumption}
\min_{R \in \mathcal{R}} \ \  \min_{i = 1,\dots,R-1} \{ \lambda_i^R, \alpha - \beta(1 - \lambda_i^R) \} \geq 0,
\end{equation}
then
% for all $R = R_{i,a}$ the matrices 
%$$
%P_{R}, \ \ \ \ \ \ \alpha(I_R - \mathbf{1}\cdot \sigma_R) - \beta(I_R - P_{R})
%$$
%have non-negative eigenvalues, 
\begin{equation}
\label{eq:alpha_beta}
\frac{1}{\alpha}\frac{1}{1 - \lambda_*^{heat}} \leq \frac{1}{\beta}\frac{1}{1 - \lambda_*},
\end{equation}
where $\lambda_*^{(heat)}$ is the second largest eigenvalue of $P_{(heat)}$. In particular, if $\lambda_{R-1}^R \geq 0$ for every $R \in \mathcal{R}$, then
$
(1 - \lambda_*^{heat})^{-1} \leq (1 - \lambda_*)^{-1}
$.
\end{theorem}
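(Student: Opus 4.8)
The plan is to recast the relaxation-time comparison (\ref{eq:alpha_beta}) as a positive-semidefiniteness statement relating the Dirichlet forms of $\mathcal{M}$ and $\mathcal{M}_{heat}$, and then to exploit the block structure of (\ref{eq:general_chain})--(\ref{eq:heat_chain}) to reduce this to an elementary eigenvalue inequality on each piece $R$. Throughout I would work in the weighted space with inner product $\langle f,g\rangle_\pi = \sum_x \pi(x) f(x) g(x)$ and use the Dirichlet form $\mathcal{E}_Q(f,f) = \langle f,(I-Q)f\rangle_\pi$ of a $\pi$-reversible transition matrix $Q$. Since both chains are $\pi$-reversible (the original because it is a convex combination of the $\pi$-reversible $P_R$, the heat-bath chain by construction), the variational identity $1-\lambda_* = \min\{\mathcal{E}_P(f,f)/\|f\|_\pi^2 : \langle f,\mathbf{1}\rangle_\pi = 0\}$ applies to both, with $\lambda_*,\mathcal{E}_P$ replaced by $\lambda_*^{heat},\mathcal{E}_{P_{heat}}$ for $\mathcal{M}_{heat}$.

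First I would show that (\ref{eq:alpha_beta}) follows from the single quadratic-form inequality $\beta\,\mathcal{E}_P(f,f) \le \alpha\,\mathcal{E}_{P_{heat}}(f,f)$ for all $f$. Since $\alpha,\beta>0$ and the chains are ergodic (so $1-\lambda_*,1-\lambda_*^{heat}>0$), inequality (\ref{eq:alpha_beta}) is equivalent to $\beta(1-\lambda_*) \le \alpha(1-\lambda_*^{heat})$. Evaluating the form inequality at a (nonconstant) minimiser $g$ of the heat-bath Rayleigh quotient and chaining
$$
\beta(1-\lambda_*) \le \beta\,\frac{\mathcal{E}_P(g,g)}{\|g\|_\pi^2} \le \alpha\,\frac{\mathcal{E}_{P_{heat}}(g,g)}{\|g\|_\pi^2} = \alpha(1-\lambda_*^{heat})
$$
yields exactly this.

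The heart of the argument is to verify $\beta\,\mathcal{E}_P \le \alpha\,\mathcal{E}_{P_{heat}}$ block by block. Using $\pi(x) = \pi(R)\tilde{\pi}_R(x)$ for $x\in R$ together with reversibility of $P_R$, the global Dirichlet form splits as
$$
\mathcal{E}_P(f,f) = \sum_{a\in\mathcal{L}}\rho(a)\sum_{R\in\mathcal{R}_a}\pi(R)\,\mathcal{E}^{(R)}_{P_R}(f|_R,f|_R),
$$
and identically for $P_{heat}$ with $P_R$ replaced by $\mathbf{1}\cdot\sigma_R$, where $\mathcal{E}^{(R)}$ is the Dirichlet form on $R$ with respect to $\tilde{\pi}_R$. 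On a fixed block the kernel $\mathbf{1}\cdot\sigma_R$ is precisely the $\tilde{\pi}_R$-orthogonal projection onto the constants, so $\mathcal{E}^{(R)}_{\mathbf{1}\cdot\sigma_R}(g,g) = \mathrm{Var}_{\tilde{\pi}_R}(g)$, whereas in the $\tilde{\pi}_R$-orthonormal eigenbasis $\phi_0=\mathbf{1},\phi_1,\dots$ of $P_R$ one has $\mathcal{E}^{(R)}_{P_R}(g,g) = \sum_{i\ge1}(1-\lambda_i^R)\langle g,\phi_i\rangle_{\tilde{\pi}_R}^2$ and $\mathrm{Var}_{\tilde{\pi}_R}(g) = \sum_{i\ge1}\langle g,\phi_i\rangle_{\tilde{\pi}_R}^2$. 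Hence
$$
\alpha\,\mathcal{E}^{(R)}_{\mathbf{1}\cdot\sigma_R}(g,g) - \beta\,\mathcal{E}^{(R)}_{P_R}(g,g) = \sum_{i\ge1}\bigl[\alpha-\beta(1-\lambda_i^R)\bigr]\langle g,\phi_i\rangle_{\tilde{\pi}_R}^2,
$$
which is nonnegative for every $g$ precisely because hypothesis (\ref{eq:assumption}) forces $\alpha-\beta(1-\lambda_i^R)\ge0$ for all $R$ and all $i\ge1$. As $\rho(a)\ge0$ and $\pi(R)\ge0$, summing these per-block inequalities gives $\beta\,\mathcal{E}_P\le\alpha\,\mathcal{E}_{P_{heat}}$, proving (\ref{eq:alpha_beta}). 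The ``in particular'' claim is the case $\alpha=\beta=1$, where (\ref{eq:assumption}) reduces to $\lambda_i^R\ge0$ for all $i\ge1$, i.e. $\lambda_{R-1}^R\ge0$.

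The step I expect to be the main obstacle is the bookkeeping that turns the global form comparison into the per-block one: the partition $\mathcal{R}_a$ depends on $a$, so the decomposition of the identity as $\sum_{R\in\mathcal{R}_a} I_R$ (and thus of $\mathcal{E}_P$) must be performed separately for each $a$ before averaging against $\rho$, and one must check that cross terms between distinct blocks vanish because $P_R$ and $\mathbf{1}\cdot\sigma_R$ are supported on $R\times R$. The remaining ingredient---that $P_R$ and its heat-bath counterpart share the eigenvector $\mathbf{1}$ and are simultaneously diagonalisable---is immediate once one notes that $\mathbf{1}\cdot\sigma_R$ is a rank-one projection, so the scalar inequality $\alpha-\beta(1-\lambda_i^R)\ge0$ transfers verbatim into the positive-semidefiniteness of $\alpha\,\mathcal{E}^{(R)}_{\mathbf{1}\cdot\sigma_R}-\beta\,\mathcal{E}^{(R)}_{P_R}$.
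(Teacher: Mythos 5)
Your argument is the Dirichlet-form rendering of the paper's own proof rather than a genuinely different route: the paper phrases the identical per-block comparison as positive semidefiniteness of $D^{-1}[\alpha(I_R - \mathbf{1}\cdot\sigma_R) - \beta(I_R - P_R)]D$ and invokes the eigenvalue-monotonicity principle of Proposition~\ref{prop:eigenvalue_comparison}, and it records in Appendix~\ref{app:dirichlet} that, for a common stationary distribution, the quadratic-form and positive-semidefiniteness formulations are equivalent. Your block decomposition of $\mathcal{E}_P$ and the identity $\alpha\,\mathcal{E}^{(R)}_{\mathbf{1}\cdot\sigma_R}(g,g)-\beta\,\mathcal{E}^{(R)}_{P_R}(g,g)=\sum_{i\geq 1}\bigl[\alpha-\beta(1-\lambda_i^R)\bigr]\langle g,\phi_i\rangle_{\tilde{\pi}_R}^2$ are precisely the content of the decomposition (\ref{eq:general_chain}) and of Proposition~\ref{prop:eigen_difference}, so that part is sound.

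There is, however, one genuine gap. Your passage from the form inequality $\beta\,\mathcal{E}_P\leq\alpha\,\mathcal{E}_{P_{heat}}$ to $\beta(1-\lambda_*)\leq\alpha(1-\lambda_*^{heat})$ explicitly assumes $\alpha,\beta>0$, whereas the theorem only assumes $\alpha\cdot\beta>0$ and is deliberately stated so that it can be applied with $\alpha,\beta<0$; the paper does exactly this in Case~3 of the proof of Theorem~\ref{thm:curveball_KTV} and in the lower bound of Theorem~\ref{thm:global_curveball}. Concretely, your first inequality $\beta(1-\lambda_*)\leq\beta\,\mathcal{E}_P(g,g)/\|g\|_\pi^2$, with $g$ a minimiser of the heat-bath Rayleigh quotient, uses the variational bound $1-\lambda_*\leq\mathcal{E}_P(g,g)/\|g\|_\pi^2$ multiplied by $\beta$, and this reverses when $\beta<0$. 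The repair is short and symmetric: for $\alpha,\beta<0$ evaluate the form inequality at a minimiser $g$ of the Rayleigh quotient of $P$ instead, giving $\beta(1-\lambda_*)=\beta\,\mathcal{E}_P(g,g)/\|g\|_\pi^2\leq\alpha\,\mathcal{E}_{P_{heat}}(g,g)/\|g\|_\pi^2\leq\alpha(1-\lambda_*^{heat})$, the last step because $\alpha<0$ and $\mathcal{E}_{P_{heat}}(g,g)/\|g\|_\pi^2\geq 1-\lambda_*^{heat}$. A second, cosmetic omission: the paper also observes that under (\ref{eq:assumption}) all eigenvalues of $P$ and $P_{heat}$ are non-negative, so that the second-largest eigenvalue coincides with $\lambda_*=\max\{\lambda_1,|\lambda_{N-1}|\}$ from Section~\ref{sec:background}; since the theorem statement defines $\lambda_*^{(heat)}$ as the second largest eigenvalue, your proof of the literal statement does not require this, but it is needed if one wants (\ref{eq:alpha_beta}) to be a genuine relaxation-time comparison.
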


The intuition behind Theorem \ref{thm:heat_comparison} is that in order to compare the relaxation times of a Markov chain and its heat-bath variant, it suffices to compare them locally on the sets $R$. Note that $\alpha$ and $\beta$ can both be negative, so that this statement can  be used to lower bound the relaxation time of the heat-bath variant in terms of the original relaxation time as well.  
We will use the following propositions in the proof of Theorem \ref{thm:heat_comparison}. For $S \subseteq \Omega$, the matrix $I_S$ is defined by $I_S(x,x) = 1$ if $x \in S$ and zero otherwise. Also, a symmetric real-valued matrix $A$ is positive semidefinite if all its eigenvalues are non-negative, and this is denoted by $A \succeq 0$.

\begin{proposition}[\cite{Zhang1999}]\label{prop:eigenvalue_comparison}
Let $X,Y$ be symmetric $l \times l$ matrices. If $X - Y \succeq 0$, then $\lambda_i(X) \geq \lambda_i(Y)$ for $i = 1,\dots, l$, where $\lambda_i(C)$ is the $i$-th largest eigenvalue of $C = X,Y$.
\end{proposition}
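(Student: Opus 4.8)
The plan is to deduce the statement from the Courant--Fischer min--max characterization of the eigenvalues of a symmetric matrix, which reduces the desired eigenvalue inequality to the pointwise comparison of Rayleigh quotients that is supplied directly by the hypothesis $X - Y \succeq 0$.

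First I would recall the variational formula. For a symmetric $l \times l$ matrix $C$ with eigenvalues $\lambda_1(C) \geq \dots \geq \lambda_l(C)$, and for each index $i$,
$$
\lambda_i(C) \ = \ \max_{\substack{S \subseteq \R^l \\ \dim S = i}} \ \min_{\substack{v \in S \\ v \neq 0}} \ \frac{v^\top C v}{v^\top v},
$$
where the outer maximum ranges over all $i$-dimensional subspaces $S$ of $\R^l$. This is the key tool; I would either cite it as standard or include the usual short derivation from the spectral theorem, writing $v$ in an orthonormal eigenbasis of $C$. Next I would translate the hypothesis: since $X - Y \succeq 0$, by definition every eigenvalue of $X - Y$ is non-negative, equivalently $v^\top(X-Y)v \geq 0$, and hence $v^\top X v \geq v^\top Y v$ for all $v \in \R^l$. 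Dividing by $v^\top v > 0$ for $v \neq 0$ gives the pointwise domination of Rayleigh quotients $\frac{v^\top X v}{v^\top v} \geq \frac{v^\top Y v}{v^\top v}$.

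I would then combine the two ingredients. Fix an index $i$ and an arbitrary $i$-dimensional subspace $S$. For every nonzero $v \in S$ the pointwise inequality applies, so taking the infimum over such $v$ preserves it, yielding $\min_{v \in S,\, v \neq 0} \frac{v^\top X v}{v^\top v} \geq \min_{v \in S,\, v \neq 0} \frac{v^\top Y v}{v^\top v}$. Taking the maximum of both sides over all $i$-dimensional subspaces $S$ and invoking the min--max formula for $X$ on the left and for $Y$ on the right gives $\lambda_i(X) \geq \lambda_i(Y)$. Since $i$ was arbitrary, this holds for all $i = 1, \dots, l$, which is exactly the claim.

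The argument is routine once the variational formula is available, so I do not anticipate a genuine obstacle; the only point requiring care is to invoke the min--max characterization in the correct form (maximum over $i$-dimensional subspaces of the inner minimum), so that the pointwise inequality $v^\top X v \geq v^\top Y v$ propagates in the right direction through both the inner minimum and the outer maximum. An equally clean alternative is the lower Weyl inequality $\lambda_i(X) = \lambda_i\big(Y + (X-Y)\big) \geq \lambda_i(Y) + \lambda_l(X-Y)$ combined with $\lambda_l(X-Y) \geq 0$, which holds precisely because $X - Y \succeq 0$; but since Weyl's inequality is itself usually derived from Courant--Fischer, I regard the direct route above as the most self-contained.
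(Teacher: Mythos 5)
Your proof is correct. The paper does not prove this proposition at all---it is stated as a known fact with a citation to Zhang's \emph{Matrix Theory}, where the standard proof is precisely the Courant--Fischer min--max argument (Weyl's monotonicity theorem) that you give, so your write-up matches the canonical source; your handling of the two quantifier layers (pointwise Rayleigh-quotient domination passing through the inner minimum and then the outer maximum) is exactly right, and the Weyl-inequality alternative you mention is equally valid.
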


\begin{proposition}
\label{prop:eigen_difference}
Let $X$ be the $k \times k$ transition matrix of an ergodic reversible Markov chain with stationary distribution $\pi$, and eigenvalues $1 = \lambda_0 > \lambda_1 \geq \dots \geq \lambda_{k-1}$. Let $X^* = \lim_{t \rightarrow \infty} X^t$ be the matrix containing the row vector $\pi$ on every row. Then the eigenvalues of $\alpha(I - X^*) - \beta(I -  X)$ are 
$$
\{0\} \cup \{ \alpha - \beta(1 - \lambda_i) \ \big| \  i = 1,\dots,k-1\} .
$$
for given constants $\alpha$ and $\beta$.
\end{proposition}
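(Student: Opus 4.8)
The plan is to exploit reversibility to pass to a symmetric matrix, diagonalize the three building blocks $I$, $X$, and $X^*$ in a single orthonormal eigenbasis, and read off the spectrum directly. Write $D = \mathrm{diag}(\pi)$ and consider the symmetrized matrix $S := D^{1/2} X D^{-1/2}$. Detailed balance $\pi(x)X(x,y) = \pi(y)X(y,x)$ says exactly that $S$ is symmetric, and since $S$ is a similarity transform of $X$ it has the same eigenvalues $1 = \lambda_0 > \lambda_1 \ge \dots \ge \lambda_{k-1}$. By the spectral theorem $S$ admits an orthonormal basis of real eigenvectors $\phi_0, \dots, \phi_{k-1}$ with $S\phi_i = \lambda_i \phi_i$.

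Next I would pin down the top eigenvector. The all-ones vector $\mathbf{1}$ is the right $1$-eigenvector of $X$, so $\phi_0 := D^{1/2}\mathbf{1}$ satisfies $S\phi_0 = \phi_0$; moreover $\|\phi_0\|^2 = \sum_x \pi(x) = 1$, so $\phi_0$ (with entries $\sqrt{\pi(x)}$) is already a unit eigenvector for $\lambda_0 = 1$. The key computation is then that the symmetrization of $X^*$ is precisely the rank-one orthogonal projection onto $\phi_0$: since $X^* = \mathbf{1}\pi$ (column vector times row vector), a direct check gives $D^{1/2} X^* D^{-1/2} = (D^{1/2}\mathbf{1})(\pi D^{-1/2}) = \phi_0 \phi_0^{\top}$, using that the $x$-entry of both $D^{1/2}\mathbf{1}$ and $(\pi D^{-1/2})^{\top}$ equals $\sqrt{\pi(x)}$. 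Thus $S^* := D^{1/2}X^* D^{-1/2}$ fixes $\phi_0$ and, by orthonormality, annihilates $\phi_1, \dots, \phi_{k-1}$.

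With these identifications the conclusion is immediate. Conjugating by $D^{1/2}$ preserves eigenvalues, so it suffices to compute the spectrum of the symmetric matrix $M := \alpha(I - S^*) - \beta(I - S)$. Evaluating on the common eigenbasis: on $\phi_0$ both $I - S^*$ and $I - S$ vanish, giving eigenvalue $0$; on $\phi_i$ for $i \ge 1$ we have $(I - S^*)\phi_i = \phi_i$ and $(I - S)\phi_i = (1 - \lambda_i)\phi_i$, so the eigenvalue is $\alpha - \beta(1 - \lambda_i)$. This yields exactly the claimed multiset $\{0\} \cup \{\alpha - \beta(1 - \lambda_i) : i = 1, \dots, k-1\}$.

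I do not expect a genuine obstacle here; the only place demanding care is verifying that $X^*$ symmetrizes to the projection $\phi_0\phi_0^{\top}$, i.e.\ that $X^*$ acts as the spectral projector onto the stationary eigenline and so shares the eigenbasis of $S$. One could alternatively obtain $S^* = \phi_0\phi_0^{\top}$ from $X^* = \lim_{t\to\infty} X^t$ together with $|\lambda_i| < 1$ for $i \ge 1$ (guaranteed by ergodicity), but the direct matrix computation above sidesteps the limit and keeps the argument self-contained.
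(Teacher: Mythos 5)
Your proof is correct and follows essentially the same route as the paper: conjugate by $\mathrm{diag}(\sqrt{\pi})$ to symmetrize, observe that $X^*$ becomes the rank-one projection onto the unit vector with entries $\sqrt{\pi(x)}$, and read off the eigenvalues on the common orthogonal eigenbasis. The only difference is cosmetic (you verify $S^* = \phi_0\phi_0^{\top}$ by direct computation rather than just invoking orthogonality when applying the operator to $w_i$), which makes the argument slightly more explicit but not substantively different.
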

\begin{proof}
As $X$ is the transition matrix of a reversible Markov chain, it holds that the matrix $V X V^{-1}$ is symmetric, where $V = \text{diag}(\pi_1^{1/2},\pi_2^{1/2},\dots,\pi_k^{1/2}) = \text{diag}(\sqrt{\pi})$.\footnote{This is the same argument for showing that a reversible Markov chain only has real eigenvalues.} Note that the eigenvalues of $\alpha(I - X^*) - \beta(I -  X)$ are the same as those of 
$$
V(\alpha(I - X^*) - \beta(I -  X))V^{-1} = \alpha(I - \sqrt{\pi}^T\sqrt{\pi}) - \beta(I - VXV^{-1}).
$$
Moreover, with $1 = (1,1,1,\dots,1)^T$ the all-ones vector, we have
$$
VXV^{-1}\sqrt{\pi}^T = V X \mathbf{1} = V \mathbf{1} = \sqrt{\pi}^T,
$$
so that $\sqrt{\pi}^T$ is an eigenvector of $VXV^{-1}$ with eigenvalue $1$. It then follows that $\sqrt{\pi}^T$ is an eigenvector of $\alpha(I - \sqrt{\pi}^T\sqrt{\pi}) - \beta(I - VXV^{-1})$ with eigenvalue $0$. Let $\sqrt{\pi}^T = w_0, w_1,\dots,w_{k-1}$ be a basis of orthogonal eigenvectors for $VXV^{-1}$ corresponding to eigenvalues $\lambda_1,\dots,\lambda_{k-1}$ (note that $X$ and $VXV^{-1}$ have the same eigenvalues). It then follows that
$$
[\alpha(I - \sqrt{\pi}^T\sqrt{\pi}) - \beta(I - VXV^{-1})] w_i = \alpha - \beta(1 - \lambda_i)
$$
because of orthogonality. This completes the proof. 
\end{proof}

\begin{proof}[Proof of Theorem \ref{thm:heat_comparison}] 
Let $D$ be the $|\Omega| \times |\Omega|$ diagonal matrix with $(D)_{xx} = \sqrt{\pi(x)}$. As the matrices $\mathbf{1} \cdot \sigma_R$ and $P_R$ define reversible Markov chains on $R$, the matrix 
$$
Y_R = D^{-1}[\alpha(I_R - \mathbf{1}\cdot \sigma_R) - \beta(I_R - P_{R})]D
$$
is symmetric. Moreover, from the assumption in (\ref{eq:assumption}), together with Proposition \ref{prop:eigen_difference} and the fact that similar\footnote{Two square matrices $A$ and $B$ are \emph{similar} if there exists an invertible matrix $T$ such that $A = T^{-1}BT$.} matrices have the same set of eigenvalues, it follows that $Y_R$ is positive semidefinite. Since any non-negative linear combination of positive semidefinite matrices is again positive semidefinite, the matrix 
$$
D^{-1}[\alpha(I - P_{heat}) - \beta(I - P)]D = \sum_{a \in \mathcal{L}} \rho(a) \sum_{R \in \mathcal{R}_a} D^{-1}[\alpha(I_R - \mathbf{1}\cdot \sigma_R) - \beta(I_R - P_{R})]D
$$ 
is also positive semidefinite. Using Proposition \ref{prop:eigenvalue_comparison}, and again  the fact that similar matrices $A,B$ have the same set of eigenvalues, it follows that
$$
\alpha(1 - \lambda_i^{heat}) \geq \beta (1 - \lambda_i)
$$
where $\lambda_i^{(heat)}$ is the $i$-th largest eigenvalue of $P_{(heat)}$. Note that $P$ has non-negative eigenvalues as $D^{-1} P D$ is a non-negative linear combination of positive semidefinite matrices. A similar argument holds for $P_{heat}$ and was shown in \cite{Dyer2014}. In particular, it follows that $\lambda_1^{(heat)}$ is the second-largest eigenvalue of $P_{(heat)}$. This proves (\ref{eq:alpha_beta}).
\end{proof}

%The following result relates the spectral gaps of a Markov chain and its heat-bath variant in case all the `local' Markov chains have non-negative eigenvalues.
%\begin{theorem}[Heat-bath comparison]\label{thm:heat_comparison} Let $\mathcal{M}$ be an ergodic Markov chain with transition matrix $P$ as in (\ref{eq:general_chain}). If all the matrices $P_R$ have non-negative eigenvalues, then
%$$
%\frac{1}{1 - \lambda_*^{heat}} \leq \frac{1}{1 - \lambda_*}
%$$
%where $\lambda_*^{(heat)}$ is the second largest eigenvalue of $P_{(heat)}$.
%\end{theorem}
%
%The proof of Theorem \ref{thm:heat_comparison} follows directly from the following result, applied with $\alpha = \beta = 1$, in combination with the result of Lemma \ref{lem:spectral_comparison}. It will also be used sometimes to give a sharper bound in the relaxation time comparison.

\subsection{Markov chain definitions} \label{sec:background}
Let $\mathcal{M} = (\Omega,P)$ be an ergodic, time-reversible Markov chain over state space $\Omega$ with transition matrix $P$ and stationary distribution $\pi$. We write $P_x^t = P^t(x,\cdot)$ for the distribution over $\Omega$ at time step $t$ given that the initial state is $x \in \Omega$. It is well-known that the matrix $P$ only has real eigenvalues $1 = \lambda_0 > \lambda_1 \geq \lambda_2 \geq \dots \geq \lambda_{N-1} > -1$, where $N = |\Omega|$. Moreover, we define $\lambda_{*} = \max\{\lambda_1,|\lambda_{N-1}|\}$ is the second-largest eigenvalue of $P$. The \emph{variation distance} at time $t$ with initial state $x$ is
$$
\Delta_x(t) = \max_{S \subseteq \Omega} \big| P^t(x,S) - \pi(s)\big| = \frac{1}{2}\sum_{y \in \Omega} \big| P^t(x,y) - \pi(y)\big|
$$
and the mixing time $\tau(\epsilon)$ is defined as
$$
\tau(\epsilon) = \max_{x \in \Omega}\left\{ \min\{ t : \Delta_x(t') \leq \epsilon \text{ for all } t' \geq t\}\right\}.
$$
A Markov chain is said to be \emph{rapidly mixing} if the mixing time can be upper bounded by a function polynomial in $\ln(|\Omega|/\epsilon)$. It is well-known, e.g., following directly from Proposition 1 \cite{Sinclair1992}, that
\begin{equation}\label{eq:mixing_time}
\frac{1}{2}\frac{\lambda_*}{1 - \lambda_*} \ln(1/2\epsilon) \ \leq \  \tau(\epsilon) \ 
\leq \ \frac{1}{1 - \lambda_*}\cdot (\ln(1/\pi_*) + \ln(1/\epsilon))
\end{equation}
where $\pi_* = \min_{x \in \Omega} \pi(x)$. This roughly implies that the mixing time is determined  by the \emph{spectral gap} $(1 - \lambda_*)$, or its inverse, the \emph{relaxation time} $(1 - \lambda_*)^{-1}$. %This motivates a comparison of relaxation times of different Markov chains.

We also introduce some additional notation. We let $G_{\Omega} = (\Omega,A)$ be the state space graph, with an arc $(a,b) \in A$ if and only if $P(a,b) > 0$ for $a,b \in \Omega$ with $a \neq b$. If $P$ is symmetric, we define $H_{\Omega} = (\Omega,E)$ as the undirected counterpart of $G_{\Omega}$ with $\{a,b\} \in E$ if and only if $(a,b),(b,a) \in A$ with $a \neq b$. Moreover, the $\delta$-lazy version of $\mathcal{M}$ is the Markov chain defined by transition matrix $(1 - \delta)I + \delta P$ for $0 < \delta < 1$. Note that this chain is also ergodic, and time-reversible with stationary distribution $\pi$.

\begin{proposition}\label{prop:lazy}
If $0 < \delta < 1$ is such the transition matrix $(1- \delta)I + \delta P$ of the $\delta$-lazy version of $\mathcal{M}$ only has non-negative eigenvalues. Then
$$
\frac{1}{1 - \lambda_{*,\delta}} \leq \frac{1}{\delta} \frac{1}{1 - \lambda_{*}}
$$
 where $\lambda_{*,\delta} = \lambda_{1,\delta} = (1 - \delta) + \delta \lambda_{1}$ is the second-largest eigenvalue of $(1- \delta) + \delta P$.
\end{proposition}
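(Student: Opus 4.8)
The plan is to work entirely at the level of the spectrum, since the $\delta$-lazy chain shares its eigenvectors with $P$ and merely rescales its eigenvalues. Concretely, if $1 = \lambda_0 > \lambda_1 \geq \dots \geq \lambda_{N-1} > -1$ are the eigenvalues of $P$, then the eigenvalues of $(1-\delta)I + \delta P$ are exactly $(1-\delta) + \delta\lambda_i$ for $i = 0,\dots,N-1$, and this ordering is preserved because $\delta > 0$. Thus the top eigenvalue is $(1-\delta)+\delta\lambda_0 = 1$, and the next one down is $\lambda_{1,\delta} = (1-\delta)+\delta\lambda_1$.

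First I would use the hypothesis to pin down $\lambda_{*,\delta}$. By definition $\lambda_{*,\delta} = \max\{\lambda_{1,\delta}, |\lambda_{N-1,\delta}|\}$. The assumption that the lazy chain has only non-negative eigenvalues means $\lambda_{N-1,\delta} = (1-\delta)+\delta\lambda_{N-1} \geq 0$, and since this is the smallest eigenvalue we obtain $|\lambda_{N-1,\delta}| = \lambda_{N-1,\delta} \leq \lambda_{1,\delta}$. Hence $\lambda_{*,\delta} = \lambda_{1,\delta} = (1-\delta)+\delta\lambda_1$, as asserted in the statement, and a direct computation gives $1 - \lambda_{*,\delta} = \delta(1-\lambda_1)$.

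The final step is the comparison with $1 - \lambda_*$. Since $\lambda_* = \max\{\lambda_1,|\lambda_{N-1}|\} \geq \lambda_1$, we have $1 - \lambda_* \leq 1 - \lambda_1$, and therefore
$$
\frac{1}{1 - \lambda_{*,\delta}} = \frac{1}{\delta(1-\lambda_1)} \leq \frac{1}{\delta(1-\lambda_*)} = \frac{1}{\delta}\cdot\frac{1}{1-\lambda_*},
$$
which is exactly the claimed inequality.

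The argument is elementary, so I expect no genuine obstacle in the calculation; the one point that requires care is the role of the non-negativity hypothesis. Without it, the most negative eigenvalue $\lambda_{N-1}$ could, after the affine shift, remain negative with large absolute value and so determine $\lambda_{*,\delta}$ instead of $\lambda_1$, which would break the clean identity $1 - \lambda_{*,\delta} = \delta(1-\lambda_1)$. The hypothesis is precisely what rules this out, and it is also the mechanism by which laziness removes the contribution of negative eigenvalues to the spectral gap.
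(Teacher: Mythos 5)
Your proof is correct and follows essentially the same route as the paper's: observe that the lazy chain has eigenvalues $(1-\delta)+\delta\lambda_i$ with ordering preserved, use the non-negativity hypothesis to conclude $\lambda_{*,\delta}=\lambda_{1,\delta}$, and finish with $\lambda_1\leq\lambda_*$. Your treatment is in fact slightly more explicit than the paper's about why non-negativity forces $|\lambda_{N-1,\delta}|\leq\lambda_{1,\delta}$, but the argument is the same.
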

\begin{proof}
If $\lambda_i$ is an eigenvalue of $P$ then $\lambda_{i,\delta} := (1 - \delta) + \delta \lambda_i$ is an eigenvalue of $(1 - \delta)I + \delta P$. Note that $\lambda_i \leq \lambda_j$ if and only if $\delta \lambda_i \leq \delta \lambda_j$, which is true if and only if 
$$
\lambda_{i,\delta} = (1 - \delta) + \delta \lambda_i \leq (1 - \delta) + \delta \lambda_j = \lambda_{j,\delta}. 
$$
This in particular shows that $\lambda_{1,\delta} =  (1 - \delta) + \delta \lambda_{1}$ is indeed the second-largest eigenvalue of $(1 - \delta)I + \delta P$. Moreover, $\lambda_{i,\delta} = (1 - \delta) + \delta \lambda_i$ is equivalent to
$$
\frac{1}{1 - \lambda_{i,\delta}} = \frac{1}{\delta} \frac{1}{1 - \lambda_{i}}
$$
for $i > 0$. As the eigenvalues of $(1- \delta) + \delta P$ are all non-negative, we have
$$
\frac{1}{1 - \lambda_{*,\delta}} = \frac{1}{1 - \lambda_{1,\delta}} = \frac{1}{\delta} \frac{1}{1 - \lambda_{1}} \leq \frac{1}{\delta} \frac{1}{1 - \lambda_{*}}
$$
and this completes the proof. Note that the final inequality is true independent of the sign of $\lambda_1$.
\end{proof}

%\begin{remark} 
%The term \emph{lazy} can be misleading in general, as it might happen that a $\delta$-lazy version actually mixes faster than the original chain. E.g., consider the Markov chain with 
%$$
%P = \begin{pmatrix} \nu & 1 - \nu\\ 1 - \nu & \nu\end{pmatrix}
%$$ 
%for $\nu< 1/2$. For $\delta = 1/(2(1-\nu))$, we have 
%$$
%(1 -\delta)I + \delta P = \frac{1}{2}\begin{pmatrix}1 & 1 \\ 1 & 1\end{pmatrix}
%$$
%which has mixing time $\tau(\epsilon) = 1$. For sampling purposes as considered in this work, this is not problematic. However, in general one should be careful that a proof showing that a lazy version of a Markov chain is rapidly mixing, does not always imply that the original chain is rapidly mixing.
%\end{remark}

\subsection{Johnson graphs.}\label{sec:johnson}
%For a simple undirected graph $G = (V,E)$, the adjacency matrix $M(G)$ of $G$ is the matrix with rows and columns indexed by $V$, and defined by $M(i,j) = 1$ if and only if $\{i,j\} \in E$ and $0$ otherwise. Note that $M(G)$ is symmetric. We are mostly interested in the (distinct) eigenvalues of $M(G)$. Very often, we use the fact that the set of eigenvalues of $M(G)$ is essentially determined by the induced subgraph $G_+$ on all nodes with strictly positive degree (i.e., the set of non-isolated nodes). That is, if $M(G_+)$ has distinct eigenvalues $\{\mu_1,\dots,\mu_q\}$, then the distinct eigenvalues of $M(G)$ are (a subset of) $\{0,\mu_1,\dots,\mu_q\}$.\footnote{One of the $\mu_r$ might be zero as well.} This roughly follows from the fact that if we add an isolated node to a graph $G$, then this corresponds to adding an all-zeros column and row to the adjacency matrix of $G$ (which increases the dimension of the null-space of the adjacency matrix with one). 

One class of graphs that are of particular interest in this work, are the so-called Johnson graphs.  %The definition of these graphs can, e.g., be found in the book of Holton and Sheehan \cite{Holton1993}. The result on the eigenvalues (of the adjacency matrix) of the Johnson graph can be found, e.g., in Section 12.3.2 of \cite{Brouwer2011}.
For given integers $1 \leq q \leq p$, the undirected Johnson graph $J(p,q)$ contains as nodes all subsets of size $q$ of $\{1,\dots,p\}$, and two subsets $u,v \subseteq \{1,\dots,p\}$ are adjacent if and only if $|u \cap v| = q - 1$. We refer the reader to \cite{Holton1993,Brouwer2011} for the following facts. The Johnson graph $J(p,q)$ is a $q(p-q)$-regular graph and the eigenvalues of its adjacency matrix are given by
$$
(q - i)(p - q -i) - i \ \ \ \  \text{ with multiplicity }\ \ \ \   \binom{p}{i} - \binom{p}{i-1}
$$
for $i = 0,\dots,q$, with the convention that $\binom{p}{-1} = 0$. The following observation is included for ease of reference. It will often be used to lower bound the smallest eigenvalue of a Johnson graph.

\begin{proposition}\label{prop:johnson}
Let $p,q \in \N$ be given. The continuous function $f : \R \rightarrow \R$ defined by 
$$
f(x) = [(q - x)(p - q - x) - x] - q(p - q) = x(x - (p+1))
$$ 
is minimized for $x^* = (p+1)/2$ and $f(x^*) = -(p+1)^2/4$.
\end{proposition}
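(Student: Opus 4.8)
The plan is to proceed in two elementary steps: first establish the claimed algebraic identity, and then analyze the resulting quadratic. Since the proposition asserts an equality of two expressions for $f$ together with the location and value of its minimum, once the identity is in hand the optimization is immediate.

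First I would verify that the two formulas for $f$ agree by directly expanding the bracketed expression. Expanding the product gives $(q-x)(p-q-x) = q(p-q) - qx - (p-q)x + x^2$, and the two linear terms collapse to $-px$, so that $(q-x)(p-q-x) = q(p-q) - px + x^2$. Subtracting the extra $x$ and then the constant $q(p-q)$ leaves
$$
f(x) = x^2 - px - x = x^2 - (p+1)x = x(x-(p+1)),
$$
which is exactly the right-hand side in the statement. This confirms that $f$ is, despite its origin as an eigenvalue expression for the Johnson graph, simply a monic quadratic in $x$ with no dependence on $q$.

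Next, since the leading coefficient of $f(x) = x^2 - (p+1)x$ is positive, $f$ is a parabola opening upward and hence has a unique global minimum at its vertex. I would locate the vertex either by completing the square, writing $f(x) = \left(x - \tfrac{p+1}{2}\right)^2 - \tfrac{(p+1)^2}{4}$, or equivalently by setting $f'(x) = 2x - (p+1) = 0$. Both give the minimizer $x^* = (p+1)/2$, and substituting back yields $f(x^*) = -\tfrac{(p+1)^2}{4}$, completing the proof.

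There is no real obstacle here: the entire content is the algebraic simplification of the first step, after which the minimization of a one-variable upward quadratic is standard. The only point worth care is the bookkeeping in the expansion, to be sure the $q(p-q)$ terms cancel exactly and that the coefficient of $x$ is $-(p+1)$ rather than $-p$; this is what produces the shift of the vertex to $(p+1)/2$ and the clean value $-(p+1)^2/4$.
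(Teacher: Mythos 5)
Your proof is correct and is exactly the routine verification the paper implicitly relies on (the paper states this proposition without proof, folding the algebraic identity into the statement itself). The expansion, the cancellation of the $q(p-q)$ terms, and the vertex computation for the upward parabola are all accurate.
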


\section{Binary matrices and the switch chain}
%We are given $n,m \in \N$, fixed row sums $r = (r_1,\dots, r_m)$ and column sums $c = (c_1,\dots,c_n)$, and . The state space $\Omega(r,c)$, that we assume to be non-empty, is the set of all binary $m \times n$-matrices satisfying these row and column sums.\footnote{Necessary and sufficient conditions for checking non-emptiness can be found in \cite{Gale1957,Ryser1957}.} For $A \in \Omega$, we let $A_{ij}$ be the $2 \times n$-submatrix formed by rows $i$ and $j$, for $1 \leq i < j \leq m$. We define $U_{ij}(A) = \{k \in \{1,\dots,n\} : A(i,k) = 1,\ A(j,k) = 0\}$, with $u_{ij}(A) = |U_{ij}(A)|$, and $L_{ij} = \{k \in \{1,\dots,n\} : A(i,k) = 0,\ A(j,k) = 1\}$ with $l_{ij}(A) = |L_{ij}(A)|$. Note that $L_{ij} \cup U_{ij}$ are precisely the columns for which $A_{ij}$ has different values on its rows.  We say that two matrices $A, B \in \Omega(r,c)$ are \emph{switch-adjacent for row $i$ and $j$} if $A = B$ or if $A - B$ contains exactly four non-zero elements that occur on rows $i$ and $j$. %\footnote{In terms of bipartite graphs, this means that the symmetric difference of two bipartite graphs is precisely a four-cycle alternating between edges of the two graphs.}
%Two matrices are switch-adjacent if they are switch-adjacent for some rows $i$ and $j$. 
We are given $n,m \in \N$, fixed row sums $r = (r_1,\dots, r_m)$, column sums $c = (c_1,\dots,c_n)$, and a set of forbidden entries $\mathcal{F} \subseteq \{1,\dots,m\} \times \{1,\dots,n\}$. The state space $\Omega = \Omega(r,c,\mathcal{F})$ is the set of all binary $m \times n$-matrices $A$ satisfying these row and column sums, and for which $A(a,b) = 0$ if $(a,b) \in \mathcal{F}$. For $A \in \Omega$, we let $A_{ij}$ be the $2 \times n$-submatrix formed by rows $i$ and $j$, for $1 \leq i < j \leq m$. We define 
\begin{equation}\label{eq:forbidden_T}
U_{ij}(A) = \{k \in \{1,\dots,n\} : A(i,k) = 1,\ A(j,k) = 0 \text{ and } (j,k) \notin \mathcal{F}\},
\end{equation}
with $u_{ij}(A) = |U_{ij}(A)|$, and similarly 
\begin{equation}\label{eq:forbidden_B}
L_{ij}(A) = \{k \in \{1,\dots,n\} : A(i,k) = 0,\ A(j,k) = 1 \text{ and } (i,k) \notin \mathcal{F}\},
\end{equation}
with $l_{ij}(A) = |L_{ij}(A)|$. Note that $L_{ij} \cup U_{ij}$ are precisely the columns $k$ for which $A_{ij}$ has different values on its rows and for which neither $(i,k)$ or $(j,k)$ is forbidden. 
Matrices $A, B \in \Omega$ are \emph{switch-adjacent for row $i$ and $j$} if $A = B$ or if $A - B$ contains exactly four non-zero elements that occur on rows $i$ and $j$, and the columns $k$ and $l$ containing these non-zero elements do not have forbidden entries in $A_{ij}$. 
Two matrices are switch-adjacent if they are switch-adjacent for some rows $i$ and $j$.

%\subsection{Switch Markov chain}
\emph{$\gamma$-Switch chain.} We next introduce the notion of a $\gamma$-switch Markov chain which is done for notational convenience as there are multiple switch-based chains available in the literature. For feasible $\gamma > 0$, the transition matrix of such a chain on state space $\Omega = \Omega(r,c,\mathcal{F})$ is given by
$$
P_{\gamma}(A,B) = \left\{ \begin{array}{ll} \binom{m}{2}^{-1}\cdot \gamma &  \ \ \ \ \ \ \text{if } A \neq B \text{ are switch-adjacent}, \\
\binom{m}{2}^{-1} \sum_{1 \leq i < j \leq m} 1 - u_{ij}l_{ij} \cdot \gamma & \ \ \ \ \ \ \text{if } A = B, \\
0 &\ \ \ \ \ \  \text{otherwise,}
\end{array}\right.
$$
provided $\gamma$ satisfies the following assumption.

\begin{assumption}\label{assump:gamma}
For given $n,m,r,c$ and $\mathcal{F}$, we assume that $\gamma$ is such that 
$$
1 - u_{ij}(A)l_{ij}(A)\cdot \gamma > 0
$$
for all $A \in \Omega$ and $1 \leq i < j \leq m$.
\end{assumption}
Note that the transition probability for switch-adjacent matrices is the same everywhere in the state space, and does not depend on the matrices. In particular, the transition matrix $P_{\gamma}$ is symmetric and hence the chain is reversible with respect to the uniform distribution. The factor $2/(m(m-1))$ is included for notational convenience. The chain can roughly be interpreted as follows. We first choose two distinct rows $i$ and $j$ uniformly at random, and then transition to a different matrix switch-adjacent for rows $i$ and $j$, of which there are $u_{ij}l_{ij}$ possibilities, where every matrix has probability $\gamma$ of being chosen;  and with probability $1 - u_{ij}l_{ij}\gamma$ we do nothing. Taking $\gamma = 2/(n(n-1))$ we get back the KTV-switch chain \cite{Kannan1999}. We will later show that  (a lazy version of) the edge-switch chain in \cite{Greenhill2011,Greenhill2017} also falls within this definition. \medskip

\begin{remark}
We always assume that the set $\Omega(r,c,\mathcal{F})$ is non-empty, and that the $\gamma$-switch chain is irreducible (it is clearly always aperiodic and finite).  %This assumption implies that the Curveball chain is irreducible as well \cite{Carstens2015,Verhelst2008}. Symmetry of the transition matrices of both chains implies they are reversible w.r.t. the uniform distribution over $\Omega$.
Irreducibility is in particular guaranteed in the case there are no forbidden entries \cite{Rao1996}; or in case $n = m \geq 4$, with $\mathcal{F}$ is the set of diagonal entries, and regular marginals $c_i = r_i = d$ for some given $d \geq 1$ \cite{Greenhill2011}. Note that the condition of irreducibility is independent of the value of $\gamma$.
\end{remark}\medskip

We next explain that the $\gamma$-switch chain is of the form (\ref{eq:general_chain}). The index set 
$$
\mathcal{L} = \{(i,j) : 1 \leq i < j \leq m\}
$$ 
is the set of all pairs of distinct rows, and $\rho$ is the uniform distribution over $\mathcal{L}$, that is, $\rho(a) =  \binom{m}{2}^{-1}$ for all $a \in \mathcal{L}$. The partitions $\mathcal{R}_a$ for $a \in \mathcal{L}$ rely on the notion of a binomial neighborhood, that is also defined in \cite{Verhelst2008} to describe the Curveball Markov chain (the decomposition idea  given here is novel). 

\begin{definition}[Binomial neighborhood]
For a fixed binary matrix $A$ and row-pair $(i,j)$, the $(i,j)$-binomial neighborhood $\mathcal{N}_{ij}(A)$ of $A$ is the set of matrices that can be reached by only applying switches on rows $i$ and $j$. More formally, $N_{ij}(A)$ contains all binary matrices $B \in \Omega$ for which $A(k,l) = B(k,l)$ whenever $(k,l) \notin \{i,j\} \times U_{ij}(A) \cup L_{ij}(A)$, which in particular implies that $U_{ij}(A) \cup L_{ij}(A) = U_{ij}(B) \cup L_{ij}(B)$.\footnote{Said differently, $\mathcal{N}_{ij}(A)$ contains all matrices that can be reached by one trade on rows $i$ and $j$ in the Curveball algorithm, as described in the introduction.}
\end{definition}

It should be clear that two matrices $A, B \in \Omega$ can be part of \emph{at most} one common binomial neighborhood, see also  \cite{Verhelst2008}. This follows directly from the observation that if $B \in \mathcal{N}_{ij}(A) \setminus \{A\}$, then $A$ and $B$ differ on precisely rows $i$ and $j$, so switches using any other pair of rows $\{k,l\} \neq \{i,j\}$ can never transform $A$ into $B$. Moreover, we have $A \in \mathcal{N}_{ij}(A)$;  if $B \in \mathcal{N}_{ij}(A)$, then $A \in \mathcal{N}_{ij}(B)$ \cite{Verhelst2008}; and, if $A \in \mathcal{N}_{ij}(B)$, $B \in \mathcal{N}_{ij}(C)$, then $A \in \mathcal{N}_{ij}(C)$. That is, the relation $\sim_{ij}$ defined by $a \sim b$ if and only if $a \in \mathcal{N}_{ij}(b)$, is an equivalence relation on $\Omega$. The equivalence classes of $\sim_{ij}$ define the set $\mathcal{R}_{(i,j)}$.
%Taking the union over all $1 \leq i < j \leq m$ then gives the set of all distinct binomial neighborhoods, which is denoted by $\mathcal{B} = \mathcal{B}(n,m,r,c,\mathcal{F})$. 
Finally, note that $u_{ij}(A) = u_{ij}(B)$ and $l_{ij}(A) = l_{ij}(B)$ if $A$ and $B$ are part of the same binomial neighborhood $\mathcal{N}$. Therefore, these numbers are only neighborhood-dependent, and not element-dependent within a fixed neighborhood. Observe that
$$
|\mathcal{N}| = \binom{u_{ij} + l_{ij}}{u_{ij}}.
$$
%We often drop the subscripts $i$ and $j$ when speaking of a binomial neighborhood.

Moreover, another important observation is that the undirected state space graph (see Section \ref{sec:background}) $H$ of the $\gamma$-switch chain (which is the same for all $\gamma$) induced on a binomial neighborhood is isomorphic to a Johnson graph $J(u+l,u)$  whenever $u,l \geq 1$ (see Section \ref{sec:johnson} for notation and definition). If either $u = 0$ or $l = 0$ it consists of a single binary matrix. To see this, note that every element in the $(i,j)$-binomial neighborhood $\mathcal{N}_{ij}(A)$ can be represented by the set of indices of the columns $k$ for which $A(i,k) = 1, A(j,k) = 0$ and $(j,k) \notin \mathcal{F}$, which we denote by $Z(A_{ij})$. The set $\{1,\dots,l_{ij} + u_{ij}\}$ here is then the set indices of \emph{all} columns with precisely one $1$ and one $0$ on rows $i,j$ and that do not contain forbidden entries. Indeed, matrices $A \neq B$ are switch-adjacent for rows $i$ and $j$ if $Z(A_{ij}) \cap Z(B_{ij}) = u_{ij} - 1$.
%This is illustrated in the labelled graph on the right in Figure \ref{fig:johnson}. Roughly speaking, for any two adjacent subsets in $J(p,q)$, the unique elements that they have with respect to each other are precisely the two columns that are needed to transform one matrix into the other using a switch operation on rows $i$ and $j$.
Informally, the Markov chain resulting from always deterministically choosing rows $i$ and $j$ in the switch algorithm, is the disjoint union of smaller Markov chains each with a state space graph isomorphic to some Johnson graph.

% Note that for the switch algorithm, not every matrix $B \in \mathcal{N}_{ij}(A)$ has to be switch-adjacent to $A$ since it might require more than one switch to go from $A$ to $B$ (or vice versa). We illustrate the notion of binomial neighborhoods with an example (without forbidden entries).

\begin{example}\label{exmp:binom_neighborhood}
Consider the binary matrix
$$
A = \left(\begin{matrix}
0 & 1 & 1 & 0 & 1 & 0 & 1\\ 1 & 0 & 0 & 1 & 1 & 0 & 1 \\ 0 & 1 & 0 & 0 & 0 & 1 & 1
\end{matrix} \right)
$$ 
and the $2 \times 7$-submatrix formed by rows $1$ and $2$, which is
$$
A_{12} = \left(\begin{matrix}
0 & 1 & 1 & 0 & 1 & 0 & 1\\ 1 & 0 & 0 & 1 & 1 & 0 & 1 
\end{matrix} \right).
$$
For sake of simplicity, we (uniquely) describe every element of the $(1,2)$-binomial neighborhood $\mathcal{N}_{12}(A)$  by the first four columns (precisely those with column sums equal to one in the submatrix). For the switch chain, the induced subgraph of the undirected state space graph $H$ on the $(1,2)$-binomial neighborhood of $A$, the Johnson graph $J(4,2)$ is given in Figure \ref{fig:johnson}. %It should be noted that for every $B \in \mathcal{N}_{12}(A)$, we have $\mathcal{N}_{12}(B) = \mathcal{N}_{12}(A)$.

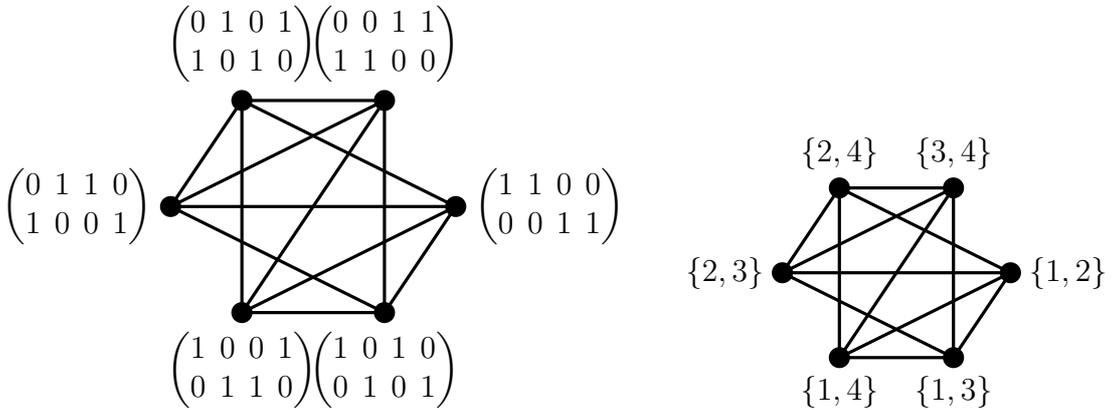
\begin{figure}[h!]
\centering
\begin{tikzpicture}[scale=3.75]
\coordinate (A1) at (-0.25,0); 
\coordinate (A2) at (0.25,0);
\coordinate (M1) at (-0.5,0.375);
\coordinate (M2) at (0.5,0.375);
\coordinate (T1) at (-0.25,0.75);
\coordinate (T2) at (0.25,0.75);

%\draw [very thick,fill opacity=0.5] (A1) -- (A2) -- (M2) -- (T2)--(T1)--(M1)--(A1)--cycle;

\node at (A1) [circle,scale=0.7,fill=black] {};
\node (a1) [below=0.1cm of A1]  {$\left(\begin{matrix}
 1 & 0 & 0 & 1 \\ 0 & 1 & 1 & 0
\end{matrix} \right)$};
\node at (A2) [circle,scale=0.7,fill=black] {};
\node (a2) [below=0.1cm of A2]  {$\left(\begin{matrix}
 1 & 0 & 1 & 0 \\ 0 & 1 & 0 & 1
\end{matrix} \right)$};
\node at (M1) [circle,scale=0.7,fill=black] {};
\node (m1) [left=0.1cm of M1]  {$\left(\begin{matrix}
 0 & 1 & 1 & 0 \\ 1 & 0 & 0 & 1
\end{matrix} \right)$};
\node at (M2) [circle,scale=0.7,fill=black] {};
\node (m2) [right=0.1cm of M2]  {$\left(\begin{matrix}
 1 & 1 & 0 & 0 \\ 0 & 0 & 1 & 1
\end{matrix} \right)$};
\node at (T1) [circle,scale=0.7,fill=black] {};
\node (t1) [above=0.1cm of T1]  {$\left(\begin{matrix}
 0 & 1 & 0 & 1 \\ 1 & 0 & 1 & 0
\end{matrix} \right)$};
%\node (t1b) [below=0.1cm of T1]  {$1$};
\node at (T2) [circle,scale=0.7,fill=black] {};
\node (t2) [above=0.1cm of T2]  {$\left(\begin{matrix}
 0 & 0 & 1 & 1 \\ 1 & 1 & 0 & 0
\end{matrix} \right)$};

\path[every node/.style={sloped,anchor=south,auto=false}]
(T1) edge[-,very thick] node {} (T2) 
(T1) edge[-,very thick] node {} (M2) 
(T1) edge[-,very thick] node {} (A1) 
(T1) edge[-,very thick] node {} (M1) 
(T2) edge[-,very thick] node {} (M1) 
(T2) edge[-,very thick] node {} (A1) 
(T2) edge[-,very thick] node {} (A2) 
(M1) edge[-,very thick] node {} (A2)
(M1) edge[-,very thick] node {} (M2)
(A1) edge[-,very thick] node {} (M2)
(A1) edge[-,very thick] node {} (A2)
(A2) edge[-,very thick] node {} (M2);  
\end{tikzpicture}
\quad
\begin{tikzpicture}[scale=3]
\coordinate (A1) at (-0.25,0); 
\coordinate (A2) at (0.25,0);
\coordinate (M1) at (-0.5,0.375);
\coordinate (M2) at (0.5,0.375);
\coordinate (T1) at (-0.25,0.75);
\coordinate (T2) at (0.25,0.75);

%\draw [very thick,fill opacity=0.5] (A1) -- (A2) -- (M2) -- (T2)--(T1)--(M1)--(A1)--cycle;

\node at (A1) [circle,scale=0.7,fill=black] {};
\node (a1) [below=0.1cm of A1]  {$\{1,4\}$};
\node at (A2) [circle,scale=0.7,fill=black] {};
\node (a2) [below=0.1cm of A2]  {$\{1,3\}$};
\node at (M1) [circle,scale=0.7,fill=black] {};
\node (m1) [left=0.1cm of M1]  {$\{2,3\}$};
\node at (M2) [circle,scale=0.7,fill=black] {};
\node (m2) [right=0.1cm of M2]  {$\{1,2\}$};
\node at (T1) [circle,scale=0.7,fill=black] {};
\node (t1) [above=0.1cm of T1]  {$\{2,4\}$};
%\node (t1b) [below=0.1cm of T1]  {$1$};
\node at (T2) [circle,scale=0.7,fill=black] {};
\node (t2) [above=0.1cm of T2]  {$\{3,4\}$};

\path[every node/.style={sloped,anchor=south,auto=false}]
(T1) edge[-,very thick] node {} (T2) 
(T1) edge[-,very thick] node {} (M2) 
(T1) edge[-,very thick] node {} (A1) 
(T1) edge[-,very thick] node {} (M1) 
(T2) edge[-,very thick] node {} (M1) 
(T2) edge[-,very thick] node {} (A1) 
(T2) edge[-,very thick] node {} (A2) 
(M1) edge[-,very thick] node {} (A2)
(M1) edge[-,very thick] node {} (M2)
(A1) edge[-,very thick] node {} (M2)
(A1) edge[-,very thick] node {} (A2)
(A2) edge[-,very thick] node {} (M2);  
\end{tikzpicture}
\caption{The induced subgraph $H$ for the switch chain on the $(1,2)$-binomial neighborhood of $A$. On the left we have indexed the nodes by the submatrices of the first four columns, and on the right by label sets, indicating the positions of the $1$'s on the top row (i.e., row $1$).}
\label{fig:johnson}
\end{figure} 
%For the Curveball chain, the induced subgraph on $\mathcal{N}_{12}(A)$ a complete graph, since we can reach every element in the binomial neighborhood with positive probability, by definition of a binomial trade operation. The edge-probabilities on the corresponding arcs in the directed state graphs differ for both chains, this will be addressed later. \qed
\end{example}

\begin{remark}
A fixed binomial neighborhood is reminiscient of the Bernoulli-Laplace Diffusion model, see, e.g., \cite{Diaconis1987,Donnelly1994} for an analysis of this model. Here, there are two bins with resp. $k$ and $n- k$ balls,  and in every transition two randomly chosen balls, one from each bin, are interchanged between the bins.  Indeed, the state space graph is then a Johnson graph \cite{Donnelly1994}. The transition probabilities are different, due to the non-zero holding probabilities in the switch algorithm, but the eigenvalues of this Markov chain are related to the eigenvalues of the switch Markov chain on a fixed binomial neighborhood, see also \cite{Diaconis1987,Donnelly1994}.%Proposition \ref{prop:johnson_eigenvalues} below regarding the eigenvalues of the adjacency matrix of the Johnson graph, can also (implicitly) be found in \cite{Diaconis1987,Donnelly1994}.
\end{remark}
\medskip

For a binomial neighborhood $\mathcal{N} = \mathcal{N}_{ij}(A)$ for given $i < j$ and $A \in \Omega$, the undirected graph $H_{\mathcal{N}} = (\Omega, E_{\mathcal{N}})$ is the graph where $E_{\mathcal{N}}$ forms the edge-set of the Johnson graph $J(u_{ij} + l_{ij},u_{ij})$ on $\mathcal{N} \subseteq \Omega$, and where all binary matrices $B \in \Omega \setminus \mathcal{N}$ are isolated nodes. We use $M(H_{\mathcal{N}})$ do denote its adjacency matrix. The discussion above leads to the following result summarizing that the $\gamma$-switch chain is of the form (\ref{eq:general_chain}), and that its heat-bath variant is precisely the Curveball Markov chain as in \cite{Verhelst2008} defined by transition matrix
$$
P_c(A,B) = \left\{ \begin{array}{ll} \binom{m}{2}^{-1}\cdot \binom{u_{ij} + l_{ij}}{u_{ij}}^{-1} &  \ \ \ \ \ \ \text{if } B \in \mathcal{N}_{ij}(A) \setminus \{A\}, \\
\binom{m}{2}^{-1}\sum_{1 \leq i < j \leq m}  \binom{u_{ij} + l_{ij}}{u_{ij}}^{-1}  & \ \ \ \ \ \ \text{if } A = B, \\
0 &\ \ \ \ \ \  \text{otherwise.}
\end{array}\right.
$$
Roughly speaking, the Curveball chain is precisely the chain sampling uniform within a fixed binomial neighborhood. For $S \subseteq \Omega$, the identity matrix $I_{S}$ on $S$ is defined by $I_{S}(x,x) = 1$ if $x \in S$ and zero elsewhere, and the all-ones matrix $J_{S}$ on $S$ is defined by $J_S(x,y) = 1$ if $x,y \in S$ and zero elsewhere.

\begin{theorem}\label{thm:switch_decomposition} The transition matrix $P_\gamma$ of the $\gamma$-switch chain is of the form (\ref{eq:general_chain}) namely
\begin{equation}\label{eq:switch_decomposition}
P_\gamma = \sum_{1 \leq i < j \leq m} \binom{m}{2}^{-1} \sum_{\mathcal{N} \in \mathcal{R}_{(i,j)}}  (1 - u_{ij}l_{ij}\cdot\gamma)\cdot I_{\mathcal{N}} + \gamma \cdot M(H_\mathcal{N}).
\end{equation}
%\begin{equation}\label{eq:switch_decomposition}
%P_\gamma = \sum_{1 \leq i < j \leq m} \binom{m}{2}^{-1} \sum_{\mathcal{N} \in \mathcal{R}_{(i,j)}}  \left[1 - u_{ij}l_{ij}\cdot\binom{n}{2}^{-1}\right] I_{\mathcal{N}} + \binom{n}{2}^{-1} M(H_\mathcal{N}).
%\end{equation}
The heat-bath variant of the $\gamma$-switch chain  is given by the Curveball chain, and can be written as
\begin{equation}\label{eq:curveball_decomposition}
P_c = \sum_{1 \leq i < j \leq m} \binom{m}{2}^{-1} \sum_{\mathcal{N} \in \mathcal{R}_{(i,j)}}  \binom{u_{ij} + l_{ij}}{u_{ij}}^{-1} J_{\mathcal{N}}.
\end{equation}
%where 
%$$
%r_A = \sum_{1 \leq i < j \leq m} \binom{m}{2}^{-1}\cdot \binom{u_{ij} + l_{ij}}{u_{ij}}^{-1}.
%$$
%Note that if $A$ and $B$ are trade-adjacent for rows $i$ and $j$, then $u_{ij}(A) = u_{ij}(B)$ and $l_{ij}(A) = l_{ij}(B)$. This fact will often (implicitly) be used.
%The holding probability is justified by the observation that, conditioned on the fact that we choose rows $i$ and $j$, the probability that we do not adjust $A$ is $\binom{u_{ij} + l_{ij}}{u_{ij}}^{-1}$.
%In case of the KTV-switch chain, and with $n \geq 3$, this is a non-negative linear combination of positive semidefinite matrices.
\end{theorem}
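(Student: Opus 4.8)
The plan is to verify both identities \eqref{eq:switch_decomposition} and \eqref{eq:curveball_decomposition} entrywise, using the equivalence-class structure of $\mathcal{R}_{(i,j)}$ together with the Johnson-graph description of each binomial neighborhood established above. First I would fix a row-pair $(i,j)$ and a neighborhood $\mathcal{N} \in \mathcal{R}_{(i,j)}$, and identify the local transition matrix as
\[
P_{\mathcal{N}} = (1 - u_{ij}l_{ij}\gamma)\, I_{\mathcal{N}} + \gamma\, M(H_{\mathcal{N}}).
\]
To check that this is a legitimate ergodic, reversible transition matrix on $\mathcal{N}$, as required by the form \eqref{eq:general_chain}, I would use that $H_{\mathcal{N}}$ is the Johnson graph $J(u_{ij}+l_{ij},\, u_{ij})$, which is $u_{ij}l_{ij}$-regular and connected. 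Regularity gives that each row of $\gamma\, M(H_{\mathcal{N}})$ sums to $\gamma u_{ij}l_{ij}$, so the row sums of $P_{\mathcal{N}}$ equal $1$; symmetry of $M(H_{\mathcal{N}})$ then makes $P_{\mathcal{N}}$ doubly stochastic, hence reversible with respect to the uniform distribution on $\mathcal{N}$, which is exactly $\tilde{\pi}_{\mathcal{N}}$ because $\pi$ is uniform on $\Omega$. Assumption \ref{assump:gamma} guarantees the holding probability $1 - u_{ij}l_{ij}\gamma$ is strictly positive, giving aperiodicity, while connectedness of the Johnson graph gives irreducibility on $\mathcal{N}$. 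The degenerate case $u_{ij}=0$ or $l_{ij}=0$, where $\mathcal{N}$ is a singleton and $M(H_{\mathcal{N}})=0$, is covered by the same formula.

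Next I would check that $\sum_{1\le i<j\le m}\binom{m}{2}^{-1}\sum_{\mathcal{N}\in\mathcal{R}_{(i,j)}} P_{\mathcal{N}}$ agrees with $P_\gamma$ entry by entry. For distinct $A,B$ that are switch-adjacent, the discussion preceding the theorem shows they lie in a \emph{unique} common binomial neighborhood $\mathcal{N}=\mathcal{N}_{ij}(A)$, associated to the unique row-pair $(i,j)$ on which they differ; there $M(H_{\mathcal{N}})(A,B)=1$ and every other block vanishes at $(A,B)$, so the sum yields $\binom{m}{2}^{-1}\gamma$, matching $P_\gamma(A,B)$. For distinct $A,B$ that are not switch-adjacent no block contributes, giving $0$. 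For the diagonal, since $\mathcal{R}_{(i,j)}$ partitions $\Omega$ for each fixed $(i,j)$, exactly one neighborhood in $\mathcal{R}_{(i,j)}$ contains $A$ and contributes $\binom{m}{2}^{-1}(1 - u_{ij}(A)l_{ij}(A)\gamma)$; summing over all $(i,j)$ reconstructs the holding probability $\binom{m}{2}^{-1}\sum_{1\le i<j\le m}(1 - u_{ij}l_{ij}\gamma)$ in the definition of $P_\gamma$. This proves \eqref{eq:switch_decomposition}.

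For the heat-bath part I would substitute the local data into \eqref{eq:heat_chain}. Because $\pi$ is uniform and $|\mathcal{N}| = \binom{u_{ij}+l_{ij}}{u_{ij}}$, the row vector $\sigma_{\mathcal{N}}$ is constant equal to $\binom{u_{ij}+l_{ij}}{u_{ij}}^{-1}$ on $\mathcal{N}$ and zero elsewhere, so the block $\mathbf{1}\cdot\sigma_{\mathcal{N}}$ is precisely $\binom{u_{ij}+l_{ij}}{u_{ij}}^{-1} J_{\mathcal{N}}$ (supported on $\mathcal{N}\times\mathcal{N}$). Plugging this into \eqref{eq:heat_chain} gives exactly the right-hand side of \eqref{eq:curveball_decomposition}, and comparing entrywise against the explicit definition of $P_c$ — again using uniqueness of the common neighborhood for the off-diagonal entries and the partition property for the diagonal — confirms $P_{heat}=P_c$.

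The stochasticity/reversibility bookkeeping and the entrywise comparisons are mechanical once the structural facts are in place, so I do not expect a real obstacle here. The single load-bearing step is the uniqueness of the common binomial neighborhood (equivalently, of the differing row-pair) for two distinct switch-adjacent matrices, since it simultaneously controls the off-diagonal matching and the consistency of the diagonal sum; as this uniqueness is already established above, the remaining work is pure verification.
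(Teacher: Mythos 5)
Your proposal is correct and follows essentially the same route as the paper's proof: the paper likewise reduces both identities to the preceding structural discussion (uniqueness of the common binomial neighborhood, the partition $\mathcal{R}_{(i,j)}$, and the Johnson-graph description) and computes $\sigma_{\mathcal{N}} = \binom{u_{ij}+l_{ij}}{u_{ij}}^{-1}$ from the uniform stationary distribution exactly as you do. You merely spell out the entrywise bookkeeping that the paper leaves implicit, which is fine.
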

\begin{proof}
The decomposition in (\ref{eq:switch_decomposition}) follows from the discussion above, and Assumption \ref{assump:gamma} guarantees that the matrix 
$$
(1 - u_{ij}l_{ij}\cdot\gamma)\cdot I_{\mathcal{N}} + \gamma \cdot M(H_\mathcal{N})
$$
indeed defines the transition matrix of a Markov chain for every $\mathcal{N}$. Moreover, remember that the $\gamma$-switch chain has uniform stationary distribution $\pi$ over $\Omega$. Indeed, for a binomial neighborhood $\mathcal{N} = \mathcal{N}_{ij}(A)$ for given $i < j$ and $A \in \Omega$, the vector $\sigma_{\mathcal{N}}$ as used in (\ref{eq:heat_chain}) is then given by
$$
\sigma_{\mathcal{N}}(x) = \frac{\pi(x)}{\pi(\mathcal{N})} = \frac{1}{|\Omega|} \cdot \frac{|\Omega|}{|\mathcal{N}|} = \frac{1}{|\mathcal{N}|} = \binom{u_{ij} + l_{ij}}{u_{ij}}^{-1}
$$
if $x \in \mathcal{N}$, and zero otherwise. This implies that 
$$
\mathbf{1}\cdot \sigma_{\mathcal{N}} = \binom{u_{ij} + l_{ij}}{u_{ij}}^{-1} J_{\mathcal{N}}.
$$
as desired.
\end{proof}

This completes our description of the $\gamma$-switch chain as a Markov chain of the form (\ref{eq:general_chain}) with heat-bath variant the Curveball chain. We next study two explicit $\gamma$-switch chains.

\subsection{KTV-switch chain}
The switch chain of the Kannan-Tetali-Vempala conjecture, as described in the introduction, can be obtained by setting $\gamma = 2/(n(n-1))$. As the product $u_{ij}(A)l_{ij}(A)$ can be at most $n^2/4$ for any $A \in \Omega$ and $1 \leq i < j \leq m$, we see that $\gamma$ satisfies Assumption \ref{assump:gamma}.

\begin{theorem}\label{thm:curveball_KTV}
%Let $n, m \in \N$ with $n \geq 3$, let $r$ and $c$ be given row and column sums, let $\mathcal{F}$ be a set of forbidden entries with, and assume that $\Omega(r,c,\mathcal{F}) \neq \emptyset$. 
Let $P_c$ and $P_{KTV}$ be the transition matrices of resp. the Curveball and KTV-switch Markov chains with $n \geq 3$. 
%Then $P_c \succeq 0$ and  
%\begin{equation}\label{eq:psd_comparison}
%\binom{n}{2} (I - P_s) \succeq (I - P_c) \succeq  \max\left\{1, \frac{2n(n-1)}{(2r_{\max} + 1)^2}\right\} (I - P_s)
%\end{equation}
%with $r_{\max} = \max_{i = 1,\dots,n} r_i$. In particular, both $P_c$ and $P_s$ only have non-negative eigenvalues, and 
Then
$$
\frac{2}{n(n-1)}\cdot (1 - \lambda_*^{KTV})^{-1} \ \leq \ (1 - \lambda_*^c)^{-1} \ \leq \ \min\left\{1, \frac{(2r_{\max} + 1)^2}{2n(n-1)}\right\} \cdot (1 - \lambda_*^{KTV})^{-1},
$$
where $\lambda_*^{(KTV,c)} = \lambda_1^{(KTV,c)} $ is the second largest eigenvalue of $P_{(KTV,c)}$. In particular, $P_{KTV}$ only has non-negative eigenvalues.
\end{theorem}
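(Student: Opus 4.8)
The plan is to apply the comparison framework of Theorem~\ref{thm:heat_comparison} to the $\gamma$-switch chain with $\gamma = 2/(n(n-1))$, whose heat-bath variant is exactly the Curveball chain by Theorem~\ref{thm:switch_decomposition}. The first task is to pin down the eigenvalues of the local chains. On a binomial neighborhood $\mathcal{N}$ with parameters $u = u_{ij}$ and $l = l_{ij}$ (both $\geq 1$, else $\mathcal{N}$ is a single point), the local transition matrix is $(1 - ul\gamma)I_{\mathcal{N}} + \gamma M(H_{\mathcal{N}})$, where $M(H_{\mathcal{N}})$ is the adjacency matrix of the Johnson graph $J(u+l,u)$. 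Writing $p = u+l \leq n$ and combining the Johnson-graph eigenvalues with Proposition~\ref{prop:johnson}, the eigenvalues of the local chain are $\lambda_i^{\mathcal{N}} = 1 + \gamma f(i)$ with $f(i) = i(i - (p+1))$, for $i = 0, \dots, \min(u,l)$.

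Set $a = \min(u,l)$ and $b = \max(u,l)$. Since $f$ is an upward parabola with vertex at $(p+1)/2 > p/2 \geq a$, it is decreasing on $[0,a]$, so the smallest local eigenvalue is $\lambda_a^{\mathcal{N}} = 1 + \gamma f(a) = 1 - \gamma\, a(b+1)$. Establishing that $P_{KTV}$ has non-negative eigenvalues therefore amounts to showing $a(b+1) \leq 1/\gamma = n(n-1)/2$ for every neighborhood. As $a(b+1)$ is increasing in the sum $a+b$, and $a+b = u+l \leq n$, the maximum is attained when $a+b = n$; a short parity check ($a = b = n/2$ for even $n$, and $a = (n-1)/2$, $b = (n+1)/2$ for odd $n$) shows $a(b+1) \leq n(n-1)/2$ precisely when $n \geq 3$, with equality at $n = 3,4$ and failure at $n = 2$. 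Hence every local chain has non-negative eigenvalues, so $D^{-1}P_{KTV}D$ is a non-negative combination of positive semidefinite matrices (as in the proof of Theorem~\ref{thm:heat_comparison}); the same holds for $P_c$ as a heat-bath chain. This proves the final sentence and gives $\lambda_*^{KTV} = \lambda_1^{KTV}$ and $\lambda_*^c = \lambda_1^c$.

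For the lower bound I would invoke Theorem~\ref{thm:heat_comparison} with the negative constants $\alpha = -\gamma$ and $\beta = -1$ (so $\alpha\beta = \gamma > 0$). The hypothesis $\lambda_i^{\mathcal{N}} \geq 0$ is already verified, and the remaining condition reads $\alpha - \beta(1 - \lambda_i^{\mathcal{N}}) = (1 - \lambda_i^{\mathcal{N}}) - \gamma \geq 0$. Since $1 - \lambda_i^{\mathcal{N}} = \gamma\, i(p+1-i) \geq \gamma p \geq 2\gamma$ for $1 \leq i \leq a$, this holds. Unwinding the conclusion $\tfrac{1}{\alpha}(1-\lambda_*^c)^{-1} \leq \tfrac{1}{\beta}(1-\lambda_*^{KTV})^{-1}$ with these negative values flips the inequality and yields $(1 - \lambda_*^c)^{-1} \geq \gamma\,(1 - \lambda_*^{KTV})^{-1}$, the left-hand inequality.

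For the upper bound I would apply Theorem~\ref{thm:heat_comparison} twice with $\beta = 1$. The relevant quantity is $\max_{\mathcal{N},i}(1 - \lambda_i^{\mathcal{N}}) = \gamma\max_{\mathcal{N}} a(b+1)$. Taking $\alpha = 1$ is admissible because $\gamma\max_{\mathcal{N}} a(b+1) \leq \gamma\cdot n(n-1)/2 = 1$ by the estimate above, giving the factor $1$. Taking $\alpha = (2r_{\max}+1)^2/(2n(n-1))$ is admissible because $u,l \leq r_{\max}$ forces $a(b+1) \leq r_{\max}(r_{\max}+1) \leq (2r_{\max}+1)^2/4$, whence $\gamma\max_{\mathcal{N}} a(b+1) \leq \alpha$; this gives the factor $(2r_{\max}+1)^2/(2n(n-1))$. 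Each application yields $(1-\lambda_*^c)^{-1} \leq \alpha\,(1-\lambda_*^{KTV})^{-1}$, and intersecting the two bounds produces the claimed minimum. The main obstacle throughout is the extremal estimate $a(b+1) \leq n(n-1)/2$ and locating the exact threshold $n \geq 3$, since this single inequality simultaneously drives the semidefiniteness claim and the factor $1$ in the upper bound; everything else is bookkeeping of signs in the comparison theorem.
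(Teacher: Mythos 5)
Your proof is correct and follows the same route as the paper: three applications of Theorem~\ref{thm:heat_comparison} with pairs $(\alpha,\beta)$ that coincide with the paper's choices up to the harmless rescaling $(\alpha,\beta)\mapsto(c\alpha,c\beta)$, combined with the Johnson-graph spectrum of the local chains from Theorem~\ref{thm:switch_decomposition}. The one place you genuinely diverge is the key estimate. The paper bounds $\mu - u_{ij}l_{ij}\geq -(n+1)^2/4$ via the continuous minimum in Proposition~\ref{prop:johnson}, which only yields non-negativity of the local eigenvalues for $n\geq 5$ and explicitly leaves $n=3,4$ to the reader; you instead identify the exact smallest local eigenvalue $1-\gamma\,a(b+1)$ with $a=\min(u_{ij},l_{ij})$, $b=\max(u_{ij},l_{ij})$, and carry out an integer extremal analysis showing $a(b+1)\leq n(n-1)/2$ precisely when $n\geq 3$ (with equality at $n=3,4$). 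This is sharper: it treats all $n\geq 3$ uniformly, closes the cases the paper omits, and explains why the threshold sits exactly at $n=3$. The remaining steps --- the sign bookkeeping for $(\alpha,\beta)=(-\gamma,-1)$ giving the lower bound, and $u_{ij},l_{ij}\leq r_{\max}$ giving the $(2r_{\max}+1)^2/(2n(n-1))$ factor --- match the paper's Cases 2 and 3.
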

\begin{proof}
Let $\mathcal{N} = \mathcal{N}_{ij}(A)$ for given $i < j$ and $A \in \Omega$. We apply Theorem \ref{thm:heat_comparison} for various pairs $(\alpha,\beta)$.

\emph{Case 1: $\alpha = \beta = 1$.} From (\ref{eq:assumption}) it follows that it suffices to show that for any binomial neighborhood $\mathcal{N}$ the submatrix of
$$
Y_{\mathcal{N}} = \left[1 - u_{ij}l_{ij}\cdot\binom{n}{2}^{-1}\right] I_{\mathcal{N}} + \binom{n}{2}^{-1}M(H_{\mathcal{N}})
$$
formed by the rows and columns of $\mathcal{N}$ only has non-negative eigenvalues.
%is clearly symmetric, since $I_{\mathcal{N}}$ and $M(H_{\mathcal{N}})$ are symmetric. Fix some $\mathcal{N} \in \mathcal{B}$, and consider the matrix $Y$. It suffices to show that the smallest eigenvalue of $Y$ is non-negative.
%In general, note that if a (symmetric) $l \times l$ matrix $C$ has eigenvalues $\{\mu_1,\dots,\mu_l\}$, then the matrix $\alpha I + \beta C$ has eigenvalues $\{\alpha + \beta \mu_1, \dots,\alpha + \beta \mu_l\}$, where $I$ is the $l \times l$ identity matrix. Therefore, the smallest eigenvalue of $Y$ is determined by the smallest eigenvalue of $M(H_{\mathcal{N}})$. 
%Note that the columns and rows in $Y_{\mathcal{N}}$ corresponding to binary matrices $B \notin \mathcal{N}$ only contain zeros, since these are isolated nodes in $H^s_{\mathcal{N}}$. Therefore, in order to show that the eigenvalues of $Y_{\mathcal{N}}$ are non-negative, it is sufficient to look at the eigenvalues of the matrix formed the rows and columns of elements in $\mathcal{N}$.  We slightly abuse notation and use $Y_{\mathcal{N}}$ to denote this submatrix as well.
For any eigenvalue $\lambda$ of this submatrix, we  have
$$
\lambda = 1 + (\mu - u_{ij}l_{ij})\binom{n}{2}^{-1}
$$
where $\mu = \mu(\lambda)$ is an eigenvalue of the Johnson graph $J(u_{ij} +l_{ij}, u_{ij})$ on $\mathcal{N}$. In particular, using Proposition \ref{prop:johnson} with $p = u_{ij} + l_{ij}$ and $q = u_{ij}$, we get
$
(\mu - u_{ij}l_{ij}) \geq -\frac{1}{4}(u_{ij} + l_{ij} + 1)^2 \geq -\frac{1}{4}(n + 1)^2 
$
using that $0 \leq u_{ij} + l_{ij} \leq n$. Therefore, when $n \geq 5$, we have
$$
\lambda \geq 1 - \frac{1}{2}\frac{(n+1)^2}{n(n-1)} \geq 0.
$$
The cases $n = 3,4$ can be checked with some elementary arguments. This is left to the reader. Note that, in particular, this implies that $P_{KTV}$ only has non-negative eigenvalues when $n \geq  3$.

\emph{Case 2: $\alpha = 1$ and $\beta = (2n(n-1))/((2r_{\max} + 1)^2)$.} Using similar notation as in the previous case,  we show that 
$$
\lambda = 1 - \beta \left(1 - \left(1 + \left(\mu - u_{ij} \cdot l_{ij}\right)\binom{n}{2}^{-1}\right)\right) = 1 + \beta(\mu - u_{ij} \cdot l_{ij})\binom{n}{2}^{-1} \geq 0
$$
for any $\mu = \mu(\lambda)$ that is an eigenvalue of the Johnson graph $J(u_{ij} + l_{ij}, u_{ij})$. Again, using Proposition \ref{prop:johnson} in order to lower bound the quantity $(\mu -u_{ij} \cdot l_{ij})$, we find
$$
1 + \beta \cdot \left(\mu -  u_{ij} \cdot l_{ij}\right)\binom{n}{2}^{-1} \geq 1 - \frac{\beta}{4}(u_{ij} + l_{ij} +1)^2\binom{n}{2}^{-1} \geq 1 - \frac{\beta}{4}(2r_{\max}+1)^2\binom{n}{2}^{-1} \geq 0,
$$
using the fact that $0 \leq u_{ij} + l_{ij} \leq 2r_{\max}$ and the choice of $\beta$. 

\emph{Case 3: $\alpha = -1$ and $\beta = -2/(n(n-1))$.} We have to show that
%
%We can write
%$$
%\binom{n}{2} (I-P_S) - (I - P_c) = \binom{m}{2}^{-1} \sum_{\mathcal{N} \in \mathcal{B}} \binom{n}{2} (I_{\mathcal{N}} - S_{\mathcal{N}}) - (I_{\mathcal{N}} - C_{\mathcal{N}}).
%$$
%Again, note that the all-ones vector is an eigenvector with eigenvalue $0$ of
%$$
%Z = Z_{\mathcal{N}} =  \binom{n}{2}  (I_{\mathcal{N}} - S_{\mathcal{N}}) - (I_{\mathcal{N}} - C_{\mathcal{N}}),
%$$
%and that all other eigenvectors can be chosen orthogonal to the all-ones vector, as in the proof of Lemma \ref{lem:eigenvalue}, so that all other eigenvalues $\lambda$ of $Z$ are of the form
$$
\lambda = \binom{n}{2} \left(1 - \left(1 + \left(\mu - u_{ij} \cdot l_{ij}\right)\binom{n}{2}^{-1} \right)\right) - 1 = u_{ij} \cdot l_{ij} - \mu - 1 \geq 0
$$
for all
$$
\mu = \mu(k) = (u - k)(l - k) - k
$$
where $k = 1,\dots,u$. Note that the eigenvalue $u_{ij} \cdot l_{ij}$ for the case $k = 0$ yields the largest eigenvalue $1 = \lambda_0^{\mathcal{N}}$ of $Y_{\mathcal{N}}$, and does not have to be considered here. The maximum over $k = 1,\dots,u$ is then attained for $k = 1$, and we have
$
u_{ij} \cdot l_{ij}- \mu - 1 \geq u_{ij} \cdot l_{ij} - ((u_{ij} - 1)(l_{ij} - 1) - 1) - 1 = u_{ij} + l_{ij} - 1 \geq 0,
$ since $u_{ij}, l_{ij} \geq 1$.
\end{proof}

\subsection{Edge-switch chain}
In every step of the edge-switch algorithm, two matrix-entries $(i,a)$ and $(j,b)$ from the set $\{(c,d) : A(c,d) = 1\}$ are chosen uniformly at random. We refer to it as the edge-switch algorithm, as for the interpretation of uniformly sampling directed graphs (where every node can have at most one self-loop), it corresponds to choosing two distinct edges uniformly at random. If the $2 \times 2$ submatrix corresponding to rows $i,j$ and columns $a,b$ forms a checkerboard, and if $(i,b)$ and $(j,b)$ are not forbidden entries, the checkerboard is adjusted (similar as for the KTV-switch algorithm as described in the introduction). Note that 
$$
P_{edge}(A,B) = \binom{\rho}{2}^{-1} 
$$
if $A$ and $B$ are switch-adjacent, where $\rho = \sum_{i} r_i$ is the total number of ones in every binary matrix in $\Omega$. Note that
$$
\gamma   =  \binom{m}{2} \binom{\rho}{2}^{-1}
$$
in this case.
%\section{Comparison for edge-switch and Curveball chain} \label{sec:edge_Curveball}
%%\subsection{The switch Markov chain revised}\label{sec:reformulate}
%We start with giving a formulation of the edge-switch Markov chain in terms of binomial neighborhoods, which will allow us to compare it to the Curveball Markov chain. Note that in this section, the switch Markov chain always refers to the edge-switch Markov chain. Throughout this section we assume that the row and column sums $r$ and $c$ are such that $\Omega(r,c)$ is non-empty, and the edge-switch Markov chain is finite, aperiodic and irreducible. %We show that the switch algorithm can also be interpreted as one where we first choose to distinct rows $i < j$ uniformly at random, and then, for certain probabilities, move to a node-arc adjacency matrix that is (feasibly) switch-adjacent to the current one for rows $i$ and $j$. That is, we claim the following.
The analysis in the main part of this section implies that we can write
\begin{equation}\label{eq:switch_reformulated}
P_{edge} = \sum_{1 \leq i < j \leq m} \binom{m}{2}^{-1} \sum_{\mathcal{N} \in \mathcal{R}_{(i,j)}}  \left[1 - u_{ij}l_{ij}\cdot\binom{m}{2}\binom{\rho}{2}^{-1}\right] I_{\mathcal{N}} + \binom{m}{2}\binom{\rho}{2}^{-1}M(H_{\mathcal{N}})
\end{equation}
where $M(H_{\mathcal{N}})$ is the adjacency matrix of a Johnson graph for every $\mathcal{N}$. However, the matrix
\begin{equation}\label{eq:switch_S}
S_{\mathcal{N}} = \left[1 - u_{ij}l_{ij}\cdot\binom{m}{2}\binom{\rho}{2}^{-1}\right] I_{\mathcal{N}} + \binom{m}{2}\binom{\rho}{2}^{-1}M(H_{\mathcal{N}})
\end{equation}
does not necessarily define the transition matrix of a Markov chain on $\mathcal{N}$, as the holding probabilities might be negative.\footnote{In versions (v1,v2) we wrongfully claim that these matrices are stochastic, from which we conclude that $(1 - \lambda_*^c)^{-1} \leq 2 (1  - \lambda_*^{edge})^{-1}$. %This can only be guaranteed if $\rho$ is large enough for given $n$ and $m$. 
We fix this claim in Theorem \ref{thm:comparison_edge} at the cost of a polynomial factor. We can therefore still conclude that the Curveball chain is rapidly mixing whenever the edge-switch chain is rapidly mixing. The results on regular instances, given later on, remain unchanged.}
We circumvent this problem by making the edge-switch chain $\delta$-lazy for $\delta$ sufficiently small. This procedure can be carried out for any $\gamma$ that does not satisfy Assumption \ref{assump:gamma}, provided $\gamma$ is polynomially bounded.

\begin{theorem}\label{thm:comparison_edge}
There exists a non-negative $\delta = \text{poly}(n,m,\rho)^{-1}$ %such that $(1- \delta)I_{\mathcal{N}} + \delta \cdot S_{\mathcal{N}} \succeq 0$ for every binomial neighborhood $\mathcal{N}$. In particular,
such that
$$
\frac{1}{1 - \lambda_*^c} \leq  \frac{1}{\delta}\cdot \frac{1}{1 - \lambda_*^{edge}}
$$
where $\lambda_*^{c, (edge)}$ is the second largest eigenvalue of $P_{c, (edge)}$.
\end{theorem}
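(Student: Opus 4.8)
The plan is to reduce the claim to the heat-bath comparison of Theorem~\ref{thm:heat_comparison} together with the laziness bound of Proposition~\ref{prop:lazy}. The key observation is that making the edge-switch chain lazy turns it into a $\gamma'$-switch chain for a \emph{smaller} value $\gamma'$ that does satisfy Assumption~\ref{assump:gamma} (repairing the decomposition \eqref{eq:switch_reformulated}--\eqref{eq:switch_S}, whose local matrices $S_{\mathcal{N}}$ need not be stochastic), while leaving its heat-bath variant equal to the Curveball chain. Thus laziness is exactly the device that brings the edge-switch chain into the framework without changing the heat-bath target.

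First I would check that the $\delta$-lazy edge-switch chain $P_{edge,\delta} = (1-\delta)I + \delta P_{edge}$ is precisely the $\gamma'$-switch chain with $\gamma' = \delta\gamma = \delta\binom{m}{2}\binom{\rho}{2}^{-1}$. This is an entrywise verification: for switch-adjacent $A \neq B$ the probability scales from $\binom{\rho}{2}^{-1} = \binom{m}{2}^{-1}\gamma$ to $\delta\binom{\rho}{2}^{-1} = \binom{m}{2}^{-1}\gamma'$, and the diagonal entries agree after expanding $(1-\delta) + \delta(1 - \gamma\binom{m}{2}^{-1}\sum_{i<j} u_{ij}l_{ij})$. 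Next I would choose $\delta$ so that each local Johnson-graph chain on a binomial neighborhood $\mathcal{N}$ is positive semidefinite. By the computation in Case~1 of Theorem~\ref{thm:curveball_KTV}, the eigenvalues of that local chain are $1 + (\mu - u_{ij}l_{ij})\gamma'$, and Proposition~\ref{prop:johnson} bounds them below by $1 - \tfrac{\gamma'}{4}(u_{ij}+l_{ij}+1)^2 \geq 1 - \tfrac{\gamma'}{4}(n+1)^2$ since $u_{ij}+l_{ij}\leq n$. Requiring $\gamma'(n+1)^2 \leq 4$ therefore makes every local chain PSD; the same inequality forces each holding probability $1 - u_{ij}l_{ij}\gamma'$ to be positive, so $\gamma'$ satisfies Assumption~\ref{assump:gamma} and $P_{edge,\delta}$ is a genuine $\gamma'$-switch chain of the form \eqref{eq:general_chain}. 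Taking $\delta = \min\{1/2,\ 4((n+1)^2\gamma)^{-1}\}$ meets all requirements and is bounded below by an inverse polynomial since $\gamma \leq \binom{m}{2}$.

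With $\delta$ fixed, I would invoke Theorem~\ref{thm:switch_decomposition}: the heat-bath variant of \emph{any} $\gamma'$-switch chain is the Curveball chain $P_c$, because the heat-bath variant depends only on the binomial-neighborhood partition and the uniform stationary distribution, not on $\gamma'$. Applying Theorem~\ref{thm:heat_comparison} with $\alpha = \beta = 1$ — legitimate because the local chains are PSD, so $\lambda_{R-1}^R \geq 0$ — gives $(1 - \lambda_*^c)^{-1} \leq (1 - \lambda_*^{edge,\delta})^{-1}$. Since the local chains are PSD, the full chain $P_{edge,\delta}$ also has non-negative eigenvalues (its $D$-conjugate is a non-negative combination of PSD matrices, as in the proof of Theorem~\ref{thm:heat_comparison}), so Proposition~\ref{prop:lazy} applies with $\mathcal{M} = P_{edge}$ and yields $(1 - \lambda_*^{edge,\delta})^{-1} \leq \delta^{-1}(1 - \lambda_*^{edge})^{-1}$. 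Chaining the two inequalities gives exactly the asserted bound; note that we never need $P_{edge}$ itself to fit the framework, which is the whole point of passing to the lazy version.

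The main obstacle is the bookkeeping around $\delta$: a single value must simultaneously lie in $(0,1)$, make $\gamma'$ satisfy Assumption~\ref{assump:gamma}, render every local Johnson chain PSD (so that both the heat-bath comparison and the laziness proposition apply), and stay bounded below by an inverse polynomial so that the factor $\delta^{-1}$ is polynomial. All four constraints collapse to the single requirement $\gamma'(n+1)^2 \leq 4$ together with $\delta < 1$, so beyond confirming that $\gamma = \binom{m}{2}\binom{\rho}{2}^{-1}$ is polynomially bounded there is no genuine analytic difficulty; the conceptual content lies entirely in recognizing that laziness converts the ill-behaved decomposition into a valid $\gamma'$-switch chain whose heat-bath target is unchanged.
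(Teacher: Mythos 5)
Your proposal is correct and follows essentially the same route as the paper: make the edge-switch chain $\delta$-lazy so that each local matrix $(1-\delta)I_{\mathcal{N}} + \delta S_{\mathcal{N}}$ becomes a stochastic matrix with non-negative eigenvalues, then chain Theorem~\ref{thm:heat_comparison} (with $\alpha=\beta=1$) and Proposition~\ref{prop:lazy}. The only cosmetic differences are that you pick $\delta$ via the Johnson-graph eigenvalue bound of Proposition~\ref{prop:johnson} (condition $\gamma'(n+1)^2 \leq 4$), whereas the paper takes $\delta = \tfrac{1}{2}\bigl[\tfrac{n^2}{4}\binom{m}{2}\binom{\rho}{2}^{-1}\bigr]^{-1}$ so that all holding probabilities are at least $1/2$, and that you make explicit the (correct) observation that the lazy chain is literally the $\gamma'$-switch chain with $\gamma'=\delta\gamma$; both choices give a $\delta$ of the same inverse-polynomial order.
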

\begin{proof}
Note that
\begin{eqnarray}
(1 - \delta)I + \delta P_{edge} &=&  \sum_{i < j } \binom{m}{2}^{-1} \sum_{\mathcal{N} \in \mathcal{R}_{(i,j)}} (1 - \delta) + \delta \cdot S_{\mathcal{N}}  \nonumber \\
& = & \sum_{i < j } \binom{m}{2}^{-1} \sum_{\mathcal{N} \in \mathcal{R}_{(i,j)}}  \left[1 - \delta \cdot u_{ij}l_{ij}\binom{m}{2}\binom{\rho}{2}^{-1}\right] I_{\mathcal{N}} + \delta\cdot \binom{m}{2}\binom{\rho}{2}^{-1}M(H_{\mathcal{N}}) \nonumber
\end{eqnarray}
so by taking, e.g., 
$$
\delta = \frac{1}{2}\left[ \frac{n^2}{4}\binom{m}{2}\binom{\rho}{2}^{-1}\right]^{-1}
$$
we see that the matrices $(1 - \delta) + \delta \cdot S_{\mathcal{N}}$ are  stochastic matrices with non-negative eigenvalues, as all holding probabilities are at least $1/2$. Here we also use the fact that $u_{ij}(A)l_{ij}(A) \leq n^2/4$ for all $A \in \Omega$ and $1 \leq i < j \leq m$. We may conclude that $(1 - \lambda_{*}^c)^{-1} \leq (1 - \lambda_{*,\delta}^{edge})^{-1} \leq  (1 - \lambda_{*}^{edge})^{-1}/\delta$ where we use
Proposition \ref{prop:lazy} in the last inequality. 
\end{proof}\medskip

For certain instances we can do better than the $\delta$ in the proof of the previous theorem.

\begin{theorem}\label{thm:regular}
Let $\Omega = \Omega(n,d,\mathcal{F})$ be the set of square $n \times n$ binary matrices with row and column sums equal to $d \in \N$, so that $\rho = nd$, and forbidden entries $\mathcal{F}$. Then
$$
(1 - \lambda_*^{c})^{-1} \leq \left(\frac{2d+1}{2d}\right)^2 (1 - \lambda_*^{edge})^{-1}
$$
\end{theorem}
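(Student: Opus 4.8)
The plan is to compare the Curveball chain not to the edge-switch chain itself, but to a \emph{mildly lazy} version of it, and then strip off the laziness with Proposition~\ref{prop:lazy}. Set $\delta = \left(\tfrac{2d}{2d+1}\right)^2 = \tfrac{4d^2}{(2d+1)^2}$, so that $1/\delta = \left(\tfrac{2d+1}{2d}\right)^2$ is exactly the target constant. Using the decomposition \eqref{eq:switch_reformulated}, the $\delta$-lazy edge-switch chain $(1-\delta)I + \delta P_{edge}$ can be written as $\sum_{i<j}\binom{m}{2}^{-1}\sum_{\mathcal{N}} \tilde S_{\mathcal{N}}$ with $\tilde S_{\mathcal{N}} = (1-\delta)I_{\mathcal{N}} + \delta S_{\mathcal{N}}$ and $S_{\mathcal{N}}$ as in \eqref{eq:switch_S}. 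In the regular setting one has $u_{ij}, l_{ij} \leq d$, hence $u_{ij}l_{ij} \leq d^2$ and $u_{ij}+l_{ij} \leq 2d$; with $\gamma = \binom{n}{2}\binom{nd}{2}^{-1} = (n-1)/(d(nd-1))$ one checks $u_{ij}l_{ij}\,\gamma \leq d(n-1)/(nd-1) \leq 1$, the last step being equivalent to $d \geq 1$. Thus each $S_{\mathcal{N}}$ is already a genuine (reversible, uniform-stationary) transition matrix on $\mathcal{N}$, and so is $\tilde S_{\mathcal{N}}$; here the role of $\delta$ is not to restore stochasticity but to enforce positive semidefiniteness.

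The key step is to show that every $\tilde S_{\mathcal{N}}$ is positive semidefinite. Its eigenvalues are $1 + \delta(\mu - u_{ij}l_{ij})\gamma$ as $\mu$ runs over the eigenvalues of the Johnson graph $J(u_{ij}+l_{ij}, u_{ij})$ on $\mathcal{N}$. By Proposition~\ref{prop:johnson} with $p = u_{ij}+l_{ij}$ and $q = u_{ij}$ we have $\mu - u_{ij}l_{ij} \geq -\tfrac14(u_{ij}+l_{ij}+1)^2 \geq -\tfrac14(2d+1)^2$. Substituting $\delta = 4d^2/(2d+1)^2$, the smallest eigenvalue is at least $1 - \tfrac{\delta}{4}(2d+1)^2\gamma = 1 - d^2\gamma = 1 - d(n-1)/(nd-1) \geq 0$, again using $d \geq 1$. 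Hence each block is PSD, and by the conjugation-and-sum argument from the proof of Theorem~\ref{thm:heat_comparison} the whole $\delta$-lazy chain has non-negative eigenvalues, so Proposition~\ref{prop:lazy} applies to it.

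It remains to identify the heat-bath variant of the $\delta$-lazy edge-switch chain. Each block $\tilde S_{\mathcal{N}}$ is reversible with uniform stationary distribution on $\mathcal{N}$, so its heat-bath replacement is $\mathbf{1}\cdot\sigma_{\mathcal{N}} = |\mathcal{N}|^{-1}J_{\mathcal{N}}$, and summing over all blocks recovers $P_c$ exactly as in Theorem~\ref{thm:switch_decomposition}; that is, the heat-bath variant is again the Curveball chain. Applying Theorem~\ref{thm:heat_comparison} with $\alpha = \beta = 1$—for which hypothesis \eqref{eq:assumption} reduces precisely to the PSD-ness of the blocks established above—yields $(1-\lambda_*^c)^{-1} \leq (1-\lambda_{*,\delta}^{edge})^{-1}$. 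Finally Proposition~\ref{prop:lazy} gives $(1-\lambda_{*,\delta}^{edge})^{-1} \leq \delta^{-1}(1-\lambda_*^{edge})^{-1} = \left(\tfrac{2d+1}{2d}\right)^2 (1-\lambda_*^{edge})^{-1}$, and chaining the two inequalities finishes the proof.

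The main obstacle is arranging for the positive-semidefiniteness threshold to land exactly on the clean constant $\left(\tfrac{2d+1}{2d}\right)^2$. This depends on simultaneously invoking the sharp Johnson-graph eigenvalue bound of Proposition~\ref{prop:johnson} and the regularity bound $u_{ij}+l_{ij} \leq 2d$, together with the elementary inequality $d(n-1) \leq nd-1$, which makes both the stochasticity check and the PSD computation collapse to the single condition $d \geq 1$. Everything else is routine bookkeeping within the heat-bath and laziness machinery already in place.
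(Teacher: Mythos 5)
Your proof is correct and follows essentially the same route as the paper: the same $\delta = \bigl(\tfrac{2d}{2d+1}\bigr)^2$, the same Johnson-graph eigenvalue bound from Proposition~\ref{prop:johnson} combined with $u_{ij}+l_{ij}\leq 2d$, and the same chaining of Theorem~\ref{thm:heat_comparison} (with $\alpha=\beta=1$) with Proposition~\ref{prop:lazy}. The only cosmetic difference is that you fix $\delta$ up front and verify positive semidefiniteness of the lazy blocks directly, whereas the paper first bounds the smallest eigenvalue of $S_{\mathcal{N}}$ by $-\tfrac{1}{d}-\tfrac{1}{4d^2}$ and then rescales to read off $\delta$; your added observation that $S_{\mathcal{N}}$ is already stochastic in the regular case is correct but not needed.
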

\begin{proof}With $S_{\mathcal{N}}$ as in (\ref{eq:switch_S}) we have that any eigenvalue $\lambda$ of $S_{\mathcal{N}}$ is of the form
$$
\lambda = 1 + (\mu - u_{ij}l_{ij})\binom{n}{2}\binom{nd}{2}^{-1}
$$ 
where $\mu = \mu(\lambda)$ is an eigenvalue of the Johnson graph $J(u_{ij} + l_{ij},u_{ij})$. Proposition \ref{prop:johnson} shows that
$
(\mu - u_{ij}l_{ij}) \geq -(u_{ij}+ l_{ij} + 1)^2/4 \geq  - (2d + 1)^2/4,
$
using $0 \leq u_{ij} + l_{ij} \leq 2d$ in the last inequality. It then follows that
$$
1 + (\mu - u_{ij}l_{ij})\binom{n}{2}\binom{nd}{2}^{-1} = 1 - \frac{1}{4}\frac{(2d+1)^2n(n-1)}{nd(nd-1)} = 1 - \frac{1}{4}\frac{4d^2(n-1)}{d(nd-1)} - \frac{1}{4}\frac{(4d+1)(n-1)}{d(nd-1)}.
$$
Note that $d(n-1) \leq nd - 1$ for all $n, d \geq 1$, from which it follows that
$$
1 + (\mu - u_{ij}l_{ij})\binom{n}{2}\binom{m}{2}^{-1} \geq - \frac{1}{4}\frac{(4d+1)(n-1)}{d(nd-1)} \geq -\frac{1}{d} - \frac{1}{4d^2}
$$
for all $n \in \N$. This implies that
$(1/d+1/(4d^2))I_{\mathcal{N}} + S_{\mathcal{N}}$ is positive semidefinite. Rescaling, and rewriting, gives that
$$
\left[1 - \left(\frac{2d}{2d+1}\right)^2\right]I_{\mathcal{N}} + \left(\frac{2d}{2d+1}\right)^2 S_{\mathcal{N}}
$$
is a symmetric stochastic transition matrix with only non-negative eigenvalues, i.e., we can take 
$$
\delta = \left(\frac{2d}{2d+1}\right)^2.
$$
%For $d = 1$, we cannot improve over the bound of $2$ in Theorem \ref{thm:comparison_general} that follows when $\alpha = 1/2$. In fact, this bound is tight. This follows from the observation that when $d = 1$ the Curveball chain is the lazy version of the switch chain, i.e., $P_c = (P_s + I)/2$ (this is left to the reader). \qed
\end{proof}

\begin{corollary}\label{cor:regular}
Let $\Omega = \Omega(n,d,\mathcal{F})$ be the set of square $n \times n$ binary matrices with row and column sums equal to $d \in \N$, so that $\rho = nd$, and forbidden entries $\mathcal{F}$, and let $\lambda_{|\Omega|-1}^{edge}$ be the smallest eigenvalue of $P_{edge}$. Then
$$
(1 + \lambda_{|\Omega|-1}^{edge})^{-1} \leq \frac{4d^2}{4d^2 - 4d - 1} \leq \frac{5}{2}
$$
if $d \geq 2$.
\end{corollary}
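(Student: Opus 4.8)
The plan is to recast the claim as a lower bound on the smallest eigenvalue $\lambda_{|\Omega|-1}^{edge}$ of $P_{edge}$: the asserted inequality $(1 + \lambda_{|\Omega|-1}^{edge})^{-1} \leq 4d^2/(4d^2-4d-1)$ is, after clearing the (positive) denominator, equivalent to $1 + \lambda_{|\Omega|-1}^{edge} \geq (4d^2-4d-1)/(4d^2)$, i.e.\ $\lambda_{|\Omega|-1}^{edge} \geq -1/d - 1/(4d^2)$. The key observation is that this per-chain bound is already available neighborhood-by-neighborhood from the proof of Theorem \ref{thm:regular}: there it is shown, via Proposition \ref{prop:johnson} and the estimate $u_{ij}+l_{ij} \leq 2d$, that \emph{every} eigenvalue $\lambda$ of the matrix $S_{\mathcal{N}}$ of (\ref{eq:switch_S}), restricted to its block $\mathcal{N}$, satisfies $\lambda \geq -1/d - 1/(4d^2)$. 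So I would reuse that computation verbatim rather than redo it.

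Next I would promote this local bound to a global one using the decomposition (\ref{eq:switch_reformulated}). Since the matrices are square we have $m=n$, and (\ref{eq:switch_reformulated}) writes $P_{edge} = \sum_{i<j}\binom{n}{2}^{-1} B_{(i,j)}$ with $B_{(i,j)} = \sum_{\mathcal{N} \in \mathcal{R}_{(i,j)}} S_{\mathcal{N}}$. For a fixed row-pair $(i,j)$ the neighborhoods in $\mathcal{R}_{(i,j)}$ partition $\Omega$ and each $S_{\mathcal{N}}$ is supported on the block $\mathcal{N}\times\mathcal{N}$, so $B_{(i,j)}$ is block-diagonal and its spectrum is the union of the spectra of the blocks $S_{\mathcal{N}}|_{\mathcal{N}}$ (a singleton neighborhood simply contributes the eigenvalue $1$). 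Hence the smallest eigenvalue of each $B_{(i,j)}$ is at least $-1/d - 1/(4d^2)$. Because $P_{edge}$ is symmetric with uniform stationary distribution, I would then apply the variational characterization $\lambda_{\min}(X) = \min_{\|v\|=1} v^\top X v$: for any unit vector $v$,
$$
v^\top P_{edge}\, v = \sum_{i<j}\binom{n}{2}^{-1} v^\top B_{(i,j)}\, v \;\geq\; \sum_{i<j}\binom{n}{2}^{-1}\lambda_{\min}(B_{(i,j)}) \;\geq\; -\tfrac{1}{d} - \tfrac{1}{4d^2},
$$
and minimizing over $v$ gives $\lambda_{|\Omega|-1}^{edge} \geq -1/d - 1/(4d^2)$, i.e.\ $1 + \lambda_{|\Omega|-1}^{edge} \geq (4d^2-4d-1)/(4d^2)$.

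The final paragraph is purely arithmetic. For $d \geq 2$ one has $4d^2 - 4d - 1 > 0$ (it equals $7$ at $d=2$), so the inequality may be inverted to yield $(1 + \lambda_{|\Omega|-1}^{edge})^{-1} \leq 4d^2/(4d^2-4d-1)$; and the bound $4d^2/(4d^2-4d-1) \leq 5/2$ reduces to $12d^2 - 20d - 5 \geq 0$, which already holds at $d=2$ and stays positive for larger $d$ since the quadratic is increasing there. This is exactly the restriction $d \geq 2$ appearing in the statement. The only genuinely conceptual step — and the one I would flag as the main point rather than an obstacle — is the passage from the per-neighborhood spectral estimate to a bound on $\lambda_{\min}(P_{edge})$; it rests on the block-diagonal structure of each $B_{(i,j)}$ together with the concavity (superadditivity under convex combinations) of the smallest-eigenvalue functional, while everything else is a direct import of Theorem \ref{thm:regular} and elementary algebra.
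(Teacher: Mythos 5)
Your proposal is correct and follows essentially the same route as the paper: the paper likewise reuses the per-neighborhood estimate from the proof of Theorem \ref{thm:regular} to conclude that $\left(\tfrac{1}{d}+\tfrac{1}{4d^2}\right)I + P_{edge}$ is a non-negative combination of positive semidefinite blocks, hence $\lambda_{|\Omega|-1}^{edge} \geq -\tfrac{1}{d}-\tfrac{1}{4d^2}$, and then rewrites. Your variational/Rayleigh-quotient phrasing of the "convex combination of block-diagonal pieces" step is just an equivalent formulation of that positive-semidefiniteness argument, and your arithmetic for $d\geq 2$ checks out.
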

\begin{proof}
In the proof of Theorem \ref{thm:regular} it was shown that
$$
\left( \frac{1}{d} + \frac{1}{4d^2}\right) I + P_{edge} =  \sum_{1 \leq i < j \leq m} \binom{m}{2}^{-1} \sum_{\mathcal{N} \in \mathcal{R}_{(i,j)}}  \left( \frac{1}{d} + \frac{1}{4d^2}\right)I_{\mathcal{N}}  + S_{\mathcal{N}} \succeq 0,
$$
and hence
$
\lambda_{|\Omega|-1}^{edge} \geq -\left( \frac{1}{d} + \frac{1}{4d^2}\right).
$
Rewriting this gives the result. 
\end{proof}

With $\mathcal{F}$ the set of diagonal entries, this improves a bound of $(1 + \lambda_{|\Omega|-1}^{edge})^{-1} \leq n^2d^2/4$ of Greenhill \cite{Greenhill2011} for the edge-switch chain for the sampling of simple directed regular graphs.

\section{Parallelism in the Curveball chain}
As a binary matrix is only adjusted on two rows at the time in the Curveball algorithm, one might perform multiple binomial trades in parallel on distinct pairs of rows \cite{Carstens2016}. To be precise, in every step of the so-called \emph{$k$-Curveball algorithm}, we choose a set of $k \leq \lfloor m/2 \rfloor$ disjoint pairs of rows uniformly at random and perform a binomial trade on every pair (see introduction). For $k = \lfloor m/2 \rfloor$ this corresponds to the Global Curveball algorithm described in \cite{Carstens2016}.  We show that the induced $k$-Curveball chain is of the form (\ref{eq:general_chain}). The index set $\mathcal{L} = \mathcal{L}_k$ is the collection of all sets containing $k$ pairwise disjoint sets of two rows, i.e., 
$$
\left\{\{(1_a,1_b),(2_a,2_b),\dots,(k_a,k_b)\} \ : \ 1_a,1_b,\dots,k_a,k_b \in [m], \ |\{1_a,1_b,2_a,2_b,\dots,k_a,k_b\}| = 2k \right\},
$$
and $\rho$ is the uniform distribution over $\mathcal{L}$. For a fixed collection $\kappa \in \mathcal{L}_k$, we define the $\kappa$-neighborhood $\mathcal{N}_{\kappa}(A)$ of binary matrix $A \in \Omega$ as the set of binary matrices $B \in \Omega$ that can be obtained from $A$ by binomial trade-operations (see introduction) only involving the row-pairs in $\kappa$. Formally speaking, we have $B \in \mathcal{N}_{\kappa}(A)$ if and only if there exist binary matrices $A_l$ for $l = 0,\dots,k-1$, so that
$$
A_{l+1} \in \mathcal{N}_{(l+1)_a,(l+1)_b}(A_l)
$$
where $A = A_0$ and $B = A_{k}$. Note that the matrices $A_l$ might not all be pairwise distinct, as $A$ and $B$ could already coincide on certain pairs of rows in $\kappa$. Also note that $u_{i_ai_b}(A) = u_{i_ai_b}(B)$ and $l_{i_ai_b}(A) = l_{i_ai_b}(B)$ if $B \in \mathcal{N}_{\kappa}(A)$ for $i = 1,\dots,k$. It is not hard to see that such a neighborhood is isomorphic to a Cartesian product $W_1 \times W_2 \times \dots \times W_k$ of finite sets $W_1,\dots,W_k$ with 
$$
|W_i| = \binom{u_{i_ai_b} + l_{i_ai_b}}{u_{i_ai_b}}.\footnote{That is, the elements of $W_i$ describe a matrix on row-pair $(i_a,i_b)$.}
$$
%Note that $u_{i_ai_b}(A) = u_{i_ai_b}(B)$ and $l_{i_ai_b}(A) = l_{i_ai_b}(B)$ if $B \in \mathcal{N}_{\kappa}(A)$ for $i = 1,\dots,k$. 
Moreover, the relation $\sim_{\kappa}$ defined by $a \sim_{\kappa} b$ if and only if $b \in \mathcal{N}_{\kappa}(a)$ defines an equivalence relation, and its equivalence classes give the set $\mathcal{R}_{\kappa}$. We now consider the following artificial formulation of the original Curveball chain: we first select $k$ pairs of distinct rows uniformly at random, and then we choose one of those pairs uniformly at random and apply a binomial trade on that pair. It should be clear that this generates the same Markov chain as when we directly select a pair of distinct rows uniformly at random.
For $\mathcal{N}_{\kappa} \in \mathcal{R}_{\kappa}$ the matrix $P_{\mathcal{N}_{\kappa}}$ restricted to the rows and columns in $\mathcal{N}_{\kappa}$ is then the transition matrix of a Markov chain over $W_1 \times \dots \times W_k$, where in every step we choose an index $i \in [k]$ uniformly at random and make a transition in $W_i$ based on the (uniform) transition matrix
$$
Q_{i} = \binom{u_{i_ai_b} + l_{i_ai_b}}{u_{i_ai_b}} ^{-1} J
$$
where $J$ is the all-ones matrix of approriate size. More formally, the  matrix $P_{\mathcal{N}_{\kappa}}$ restricted to the columns and rows in $\mathcal{N}_{\kappa}$ is given by
\begin{equation}\label{eq:trans_product}
\frac{\sum_{i = 1}^k \left[ \mathbf{\otimes}_{j = 1}^{i-1} \mathcal{I}_j\right] \otimes Q_{i}  \otimes \left[\otimes_{j = i+1}^k \mathcal{I}_j\right]}{k},
\end{equation}
forming a transition matrix on $\mathcal{N}_{\kappa}$, and is zero elsewhere. Here $\mathcal{I}_j$ is the identity matrix with the same size as $Q_j$ and $\otimes$ the usual tensor product. The eigenvalues of the matrix in (\ref{eq:trans_product}) are given by
\begin{equation}\label{eq:eigen_product}
\lambda_{\mathcal{N}_{\kappa}} =  \left\{\frac{1}{k} \sum_{i = 1}^k \lambda_{{j_i},i} : 0 \leq j_i \leq |W_i| - 1\right\}
\end{equation}
where $1= \lambda_{0,i} \geq \lambda_{1,i} \geq \dots \geq \lambda_{|W_i| - 1,i}$ are the eigenvalues of $Q_i$ for $i = 1,\dots,k$.\footnote{See, e.g., \cite{Erdos2015decomposition} for a similar argument regarding the transition matrix, and eigenvalues, of a Markov chain of this form. These statements follow directly from elementary arguments involving tensor products.} It then follows that
$$
P_{c} = \sum_{\kappa \in \mathcal{L}_k} \frac{1}{|\mathcal{L}_k|} \sum_{\mathcal{N}_{\kappa} \in \mathcal{R}_{\kappa}}  P_{\mathcal{N}_{\kappa}}
$$ 
which is of the form (\ref{eq:general_chain}). For $k = 1$, we get back the description of the previous section. Now, its heat-bath variant is precisely the $k$-Curveball Markov chain
$$
P_{k-Curveball} = \sum_{\kappa \in \mathcal{L}_k} \frac{1}{|\mathcal{L}_k|} \sum_{\mathcal{N}_{\kappa} \in \mathcal{R}_{\kappa}}  \frac{1}{|\mathcal{N}_{\kappa}|} J_{\mathcal{N}_{\kappa}},
$$
where
$$
|\mathcal{N}_{\kappa}| = \prod_{i = 1}^k \binom{u_{i_ai_b} + l_{i_ai_b}}{u_{i_ai_b}}^{-1}
$$
as, roughly speaking, for a fixed neighborhood $\mathcal{N}_{\kappa}$, the $k$-Curveball chain is precisely the uniform sampler over such a neighborhood.

\begin{theorem}\label{thm:global_curveball}
We have
$$
\frac{(1 - \lambda_*^{c})^{-1}}{k} \ \leq \ (1 - \lambda_*^{k,c})^{-1} \ \leq \  (1 - \lambda_*^{c})^{-1}
$$
where $\lambda_*^{k,c}$ is the second-largest eigenvalue of the $k$-Curveball chain, and $\lambda_*^c$ the second-largest eigenvalue of the $1$-Curveball chain.
\end{theorem}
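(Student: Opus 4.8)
The plan is to recognise both chains as an instance of the heat-bath comparison of Theorem~\ref{thm:heat_comparison}. The excerpt already exhibits the $1$-Curveball chain $P_c$ in the form~(\ref{eq:general_chain}), with index set $\mathcal{L}_k$, partitions $\mathcal{R}_{\kappa}$, and local transition matrices $P_{\mathcal{N}_{\kappa}}$ given by~(\ref{eq:trans_product}); and it identifies its heat-bath variant as the $k$-Curveball chain. Thus in the notation of Theorem~\ref{thm:heat_comparison} we have $\lambda_* = \lambda_*^c$ and $\lambda_*^{heat} = \lambda_*^{k,c}$, and the whole proof reduces to controlling the local eigenvalues $\lambda_i^{\mathcal{N}_{\kappa}}$ of $P_{\mathcal{N}_{\kappa}}$ restricted to a $\kappa$-neighborhood, and then choosing two suitable pairs $(\alpha,\beta)$.

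The key observation is that the local spectrum is pinned down completely. Each factor chain $Q_i = \binom{u_{i_ai_b}+l_{i_ai_b}}{u_{i_ai_b}}^{-1} J$ is a rescaled all-ones matrix, i.e.\ the rank-one projection onto the uniform distribution on $W_i$, so its eigenvalues are $\lambda_{0,i}=1$ and $\lambda_{j,i}=0$ for $j\geq 1$. By the product formula~(\ref{eq:eigen_product}), every eigenvalue of $P_{\mathcal{N}_{\kappa}}$ is an average $\frac{1}{k}\sum_{i=1}^k \lambda_{j_i,i}$ with each term in $\{0,1\}$, hence equals $t/k$ for some $0\leq t\leq k$. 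The top eigenvalue $1=\lambda_0^{\mathcal{N}_{\kappa}}$ arises only when $j_i=0$ for all $i$; every other eigenvalue has at least one factor contributing $0$, so
$$
0 \ \leq \ \lambda_i^{\mathcal{N}_{\kappa}} \ \leq \ \frac{k-1}{k} \qquad \text{for all } i \geq 1.
$$
These two bounds are exactly what the two applications of Theorem~\ref{thm:heat_comparison} need.

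For the upper bound I take $\alpha=\beta=1$: the left-hand non-negativity in~(\ref{eq:assumption}) is the lower bound $\lambda_i^{\mathcal{N}_{\kappa}}\geq 0$, and $\alpha-\beta(1-\lambda_i^{\mathcal{N}_{\kappa}})=\lambda_i^{\mathcal{N}_{\kappa}}\geq 0$ as well, so~(\ref{eq:alpha_beta}) yields $(1-\lambda_*^{k,c})^{-1}\leq(1-\lambda_*^c)^{-1}$ (equivalently the ``in particular'' clause, since the restricted $P_{\mathcal{N}_{\kappa}}$ is positive semidefinite). For the lower bound I take $\alpha=-1$ and $\beta=-k$, which are non-zero with $\alpha\beta=k>0$; here $\alpha-\beta(1-\lambda_i^{\mathcal{N}_{\kappa}})=-1+k(1-\lambda_i^{\mathcal{N}_{\kappa}})\geq 0$ is exactly the upper bound $\lambda_i^{\mathcal{N}_{\kappa}}\leq(k-1)/k$, so~(\ref{eq:assumption}) holds and~(\ref{eq:alpha_beta}) gives $-(1-\lambda_*^{k,c})^{-1}\leq-\tfrac{1}{k}(1-\lambda_*^c)^{-1}$, i.e.\ $\tfrac{1}{k}(1-\lambda_*^c)^{-1}\leq(1-\lambda_*^{k,c})^{-1}$. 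The only genuinely delicate point is the exactness of the local eigenvalue window $[0,(k-1)/k]$: the upper endpoint $(k-1)/k$, forced by averaging $k$ product factors (at least one of which must drop to its second eigenvalue $0$), is precisely what produces the spectral-gap loss factor $k$, while the non-negativity is what makes the comparison a clean inequality not requiring a lazy version.
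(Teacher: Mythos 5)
Your proposal is correct and follows essentially the same route as the paper's proof: it writes the $1$-Curveball chain in the form (\ref{eq:general_chain}) over $\mathcal{L}_k$ with the $k$-Curveball chain as its heat-bath variant, observes that the local spectrum of each $P_{\mathcal{N}_{\kappa}}$ lies in $\{0,1/k,\dots,(k-1)/k\}\cup\{1\}$ because each $Q_i$ has eigenvalues $1$ and $0$, and then applies Theorem \ref{thm:heat_comparison} with $(\alpha,\beta)=(1,1)$ and $(\alpha,\beta)=(-1,-k)$, exactly as the paper does.
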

\begin{proof}
The upper bound follows from Theorem \ref{thm:heat_comparison}, with $\alpha = \beta = 1$, as the eigenvalues of all the $Q_i$ are non-negative, and therefore (\ref{eq:eigen_product}) implies that the eigenvalues of the matrix in (\ref{eq:trans_product}) are also non-negative. For the lower bound, we take $\alpha = -1$ and $\beta = -k$. That is, we have to show that
$$
-1 + k(1 - \mu)
$$
with $\mu \in  \lambda_{\mathcal{N}_{\kappa}} \setminus \{1\}$ as in (\ref{eq:eigen_product}). It is not hard to see that the second-largest eigenvalue in $\lambda_{\mathcal{N}_{\kappa}}$ is $(k-1)/k$, as the eigenvalues of every fixed $Q_i$ are $1 = \lambda_{0,i} > \lambda_{1,i} = \dots = \lambda_{|W_i| - 1} = 0$. This implies that
$$
-1 + k(1 - \mu) \geq -1 + k(1 - (k-1)/k) \geq 0 
$$
for all $\mu \in  \lambda_{\mathcal{N}_{\kappa}} \setminus \{1\}$.
\end{proof}
In general, the upper bound is tight for certain (degenerate) cases, that is, parallelism in the Curveball chain does not necessarily guarantee an improvement in its relaxation time. E.g., take column marginals $c_i = 1$ for $i = 1,\dots,n$, and row-marginals $r_1 = r_2 = n/2$ and $r_3 = r_4 = 0$, and consider $k = 2$. %Moreover, these decomposition ideas can also be used to study parallelism in other MCMC methods for the sampling of binary matrices,  e.g., in the KTV-switch chain by performing multiple switches in parallel, or, more general, for contingency tables with bounded entries, %However, this does not follow in the framework introduced in Section \ref{sec:general}. %We give some discussion in Appendix .. for the interested reader.

\section{Conclusion}
We believe similar ideas as in this work can be used to prove that the Curveball chain is rapidly mixing  for the sampling of undirected graphs with given degree sequences \cite{Carstens2016}, whenever one of the switch chains is rapidly mixing for those marginals. We leave this for future work, as the proof we have in mind is a bit more involved, but of a very similar nature as the ideas described here. %Proofs of a different nature for any comparison discussed in this work, or beyond, would of course be very interesting.
An interesting direction for future work is to give a better comparison between the edge-switch chain and Curveball chain. It would also be interesting to see if there exist classes of marginals for which one can give a strict improvement over the result in Theorem \ref{thm:global_curveball}.
%We are not aware of any multi-commodity flow arguments \cite{Sinclair1992} for the Curveball Markov chain, which might be an interesting direction for future work. However, the result in Theorem \ref{thm:curveball_KTV} essentially shows that if such an argument can be given for the Curveball chain showing it is rapidly mixing for some fixed marginals, then it also implies rapid mixing for the KTV-switch chain for those marginals, as the relaxation times are polynomially related.

%%%%%%%%%%%%%%%%%%%%%%%%%%%%%%%%%%%%%%%%%%%%%%%%%%%%%%%
%\section{Proof of Theorem~\ref{Thm:FabGraphs}}
%
%In this section we complete the proof of Theorem~\ref{Thm:FabGraphs}.
%
%\begin{proof}[Proof of Theorem~\ref{Thm:FabGraphs}]
%Let $G$ be a graph... Hence
% % use the amsmath align environment for multi-line equations
%  \begin{align}
%    |X| &= abcdefghijklmnopqrstuvwxyz \nonumber\\
%    &= \alpha\beta\gamma
%  \end{align}
%  This completes the proof of Theorem~\ref{Thm:FabGraphs}.
%\end{proof}
%
%\begin{figure}[!h]
%  \begin{center}
%    % use \includegraphics to import figures 
%    % \includegraphics{filename}
%  \end{center}
%  \caption{\label{fig:InformativeFigure} Here is an informative
%    figure.}
%\end{figure}

%%%%%%%%%%%%%%%%%%%%%%%%%%%%%%%%%%%%%%%%%%%%%%%%%%%%%%%
\subsection*{Acknowledgements}
Pieter Kleer is grateful to Annabell Berger and Catherine Greenhill for some useful discussions and comments regarding this work.

%\newpage
% \bibliographystyle{abbrv}
 \bibliographystyle{plain}
\bibliography{references}

\begin{thebibliography}{10}

\bibitem{Brouwer2011}
A.E. Brouwer and W.H. Haemers.
\newblock {\em Spectra of Graphs}.
\newblock Universitext. Springer New York, 2011.

\bibitem{Carstens2016}
Corrie~Jacobien Carstens, Annabell Berger, and Giovanni Strona.
\newblock Curveball: a new generation of sampling algorithms for graphs with
  fixed degree sequence.
\newblock {\em CoRR}, abs/1609.05137, 2016.

\bibitem{Diaconis1993b}
Persi Diaconis and Laurent Saloff-Coste.
\newblock Comparison techniques for random walk on finite groups.
\newblock {\em Ann. Probab.}, 21(4):2131--2156, 10 1993.

\bibitem{Diaconis1993a}
Persi Diaconis and Laurent Saloff-Coste.
\newblock Comparison theorems for reversible markov chains.
\newblock {\em Ann. Appl. Probab.}, 3(3):696--730, 08 1993.

\bibitem{Diaconis1987}
Persi Diaconis and Mehrdad Shahshahani.
\newblock Time to reach stationarity in the bernoulli-laplace diffusion model.
\newblock {\em SIAM Journal on Mathematical Analysis}, 18(1):208--218, 1987.

\bibitem{Donnelly1994}
Peter Donnelly, Peter Lloyd, and Aidan Sudbury.
\newblock Approach to stationarity of the bernoulli-laplace diffusion model.
\newblock {\em Advances in Applied Probability}, 26(3):715--727, 1994.

\bibitem{Dyer2006}
Martin Dyer, Leslie~Ann Goldberg, Mark Jerrum, and Russell Martin.
\newblock Markov chain comparison.
\newblock {\em Probab. Surveys}, 3:89--111, 2006.

\bibitem{Dyer2014}
Martin Dyer, Catherine Greenhill, and Mario Ullrich.
\newblock Structure and eigenvalues of heat-bath markov chains.
\newblock {\em Linear Algebra and its Applications}, 454:57--71, 2014.

\bibitem{Erdos2015}
P{\'{e}}ter~L. Erd{\"{o}}s, S{\'{a}}ndor~Z. Kiss, Istv{\'{a}}n Mikl{\'{o}}s,
  and Lajos Soukup.
\newblock Approximate counting of graphical realizations.
\newblock {\em PLOS ONE}, 10(7):1--20, 2015.

\bibitem{Erdos2015decomposition}
P{\'e}ter~L Erdos, Istv{\'a}n Mikl{\'o}s, and Zolt{\'a}n Toroczkai.
\newblock A decomposition based proof for fast mixing of a markov chain over
  balanced realizations of a joint degree matrix.
\newblock {\em SIAM Journal on Discrete Mathematics}, 29(1):481--499, 2015.

\bibitem{Erdos2016}
P{\'{e}}ter~L. Erd{\"{o}}s, Istv{\'{a}}n Mikl{\'{o}}s, and Zolt{\'{a}}n
  Toroczkai.
\newblock New classes of degree sequences with fast mixing swap markov chain
  sampling.
\newblock {\em CoRR}, abs/1601.08224, 2016.

\bibitem{Greenhill2011}
Catherine Greenhill.
\newblock A polynomial bound on the mixing time of a markov chain for sampling
  regular directed graphs.
\newblock {\em Electronic Journal of Combinatorics}, 18(1), 2011.

\bibitem{Greenhill2017}
Catherine~S. Greenhill and Matteo Sfragara.
\newblock The switch markov chain for sampling irregular graphs and digraphs.
\newblock {\em CoRR}, abs/1701.07101, 2017.

\bibitem{Holton1993}
D.~A. Holton and J.~Sheehan.
\newblock {\em The Petersen graph}.
\newblock Cambridge University Press Cambridge, 1993.

\bibitem{Kannan1999}
Ravi Kannan, Prasad Tetali, and Santosh Vempala.
\newblock Simple markov-chain algorithms for generating bipartite graphs and
  tournaments.
\newblock {\em Random Structures and Algorithms}, 14(4):293--308, 1999.

\bibitem{Levin2009}
David~Asher Levin, Yuval Peres, and Elizabeth~Lee Wilmer.
\newblock {\em Markov chains and mixing times}.
\newblock American Mathematical Soc., 2009.

\bibitem{Erdos2013}
Istv{\'{a}}n Mikl{\'{o}}s, P{\'{e}}ter~L. Erd{\"{o}}s, and Lajos Soukup.
\newblock Towards random uniform sampling of bipartite graphs with given degree
  sequence.
\newblock {\em Electronic Journal of Combinatorics}, 20(1), 2013.

\bibitem{Quastel1992}
Jeremy Quastel.
\newblock Diffusion of color in the simple exclusion process.
\newblock {\em Communications on Pure and Applied Mathematics}, 45(6):623--679,
  1992.

\bibitem{Rao1996}
A.~Ramachandra Rao, Rabindranath Jana, and Suraj Bandyopadhyay.
\newblock {A Markov Chain Monte Carlo Method for Generating Random (0,
  1)-Matrices with Given Marginals}.
\newblock {\em Sankhy{\"{a}}: The Indian Journal of Statistics, Series A},
  58(2), 1996.

\bibitem{Sinclair1992}
Alistair Sinclair.
\newblock Improved bounds for mixing rates of markov chains and multicommodity
  flow.
\newblock {\em Combinatorics, Probability and Computing}, 1:351--370, 1992.

\bibitem{Strona2014}
G.~Strona, D.~Nappo, F.~Boccacci, S.~Fattorini, and J.~San-Miguel-Ayanz.
\newblock A fast and unbiased procedure to randomize ecological binary matrices
  with fixed row and column totals.
\newblock {\em Nature Communications 5}, (4114), 2014.

\bibitem{Verhelst2008}
Norman~D. Verhelst.
\newblock An efficient mcmc algorithm to sample binary matrices with fixed
  marginals.
\newblock {\em Psychometrika}, 73(4):705, 2008.

\bibitem{Zhang1999}
F.~Zhang.
\newblock {\em Matrix Theory: Basic Results and Techniques}.
\newblock Universitext (Berlin. Print). Springer, 1999.

\end{thebibliography}

\appendix

\section{Markov chain comparison using Dirichlet forms}\label{app:dirichlet}
In this appendix we include some notes on the comparison framework for Markov chains based on Dirichlet forms and show that, for our setting, it is equivalent to a comparison in terms of positive semidefiniteness. The description is taken from Chapter 13.3 \cite{Levin2009}.

Let $\mathcal{M}$ be an ergodic, reversible Markov chain on state space $\Omega$ with transition matrix $P$ and stationary distribution $\pi$. The Dirichlet form for the pair $(P,\pi)$ is defined by
$$
\mathcal{E}(f,h) := \langle(I - P)f,h\rangle_{\pi} 
$$
for functions $f,h \in \{g \ \big| \ g : \Omega \rightarrow \R\}$, where $\langle g_1,g_2\rangle_{\pi} = \sum_{x \in \Omega} g_1(x)g_2(x)\pi(x)$. To illustrate the usefulness of Dirichlet forms, consider the following result, which appears, e.g., as Lemma 13.22 in \cite{Levin2009}. %We refer to \cite{Levin2009} for the reference to the original work. \medskip

\begin{lemma}
Let $P$ and $\tilde{P}$ be reversible transition matrices with stationary distributions $\pi$ and $\tilde{\pi}$, respectively. If $\tilde{\mathcal{E}}(f,f) \leq \alpha \mathcal{E}(f,f)$ for all $f \in \{g \ \big| \ g : \Omega \rightarrow \R\}$, then
$$
1 - \tilde{\lambda}_1 \leq \left[\max_{x \in \Omega} \frac{\pi(x)}{\tilde{\pi}(x)} \right]\alpha (1 - \lambda_1),
$$
where $\lambda_1$ and $\tilde{\lambda}_1$ are resp. the second largest eigenvalue of $P$ and $\tilde{P}$. In particular, if both stationary distributions are the same, we get $1 - \tilde{\lambda}_1 \leq \alpha (1 - \lambda_1)$. 
\end{lemma}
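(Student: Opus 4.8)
The plan is to reduce everything to the variational (Rayleigh quotient) characterisation of the spectral gap and then transport a single test function from one chain to the other. First I would record the standard consequence of reversibility that $P$ is self-adjoint with respect to $\langle\cdot,\cdot\rangle_{\pi}$, so its constant eigenfunction spans the $\lambda_0 = 1$ eigenspace and the gap admits the characterisation
$$
1 - \lambda_1 = \min_{f \text{ non-constant}} \frac{\mathcal{E}(f,f)}{\mathrm{Var}_{\pi}(f)}, \qquad \mathrm{Var}_{\pi}(f) = \langle f,f\rangle_{\pi} - \langle f, \mathbf{1}\rangle_{\pi}^2 .
$$
The point that makes this work is that both $\mathcal{E}(f,f)$ and $\mathrm{Var}_{\pi}(f)$ are invariant under adding a constant to $f$ (the former because $(I-P)\mathbf{1}=0$ and $P$ is self-adjoint, the latter by definition), so one may restrict to the subspace $f \perp \mathbf{1}$, where $\mathrm{Var}_{\pi}(f) = \langle f,f\rangle_{\pi}$, and apply Courant--Fischer on that subspace. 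The identical statement holds for the pair $(\tilde P, \tilde\pi)$.

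Next I would take $g$ to be an eigenfunction of $P$ realising the gap, so that $\mathcal{E}(g,g) = (1-\lambda_1)\,\mathrm{Var}_{\pi}(g)$ with $g$ non-constant, and feed $g$ as a test function into the variational formula for $\tilde P$. Combined with the hypothesis $\tilde{\mathcal{E}}(g,g) \leq \alpha\,\mathcal{E}(g,g)$ this gives
$$
1 - \tilde\lambda_1 \ \leq\ \frac{\tilde{\mathcal{E}}(g,g)}{\mathrm{Var}_{\tilde\pi}(g)} \ \leq\ \alpha\,\frac{\mathcal{E}(g,g)}{\mathrm{Var}_{\tilde\pi}(g)} \ =\ \alpha(1-\lambda_1)\,\frac{\mathrm{Var}_{\pi}(g)}{\mathrm{Var}_{\tilde\pi}(g)} .
$$

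The remaining step is a variance comparison, and this is where the factor $\max_x \pi(x)/\tilde\pi(x)$ enters. Since $\mathrm{Var}_{\pi}(g) = \min_{c \in \R}\sum_x (g(x)-c)^2\,\pi(x)$, I would plug in the (suboptimal) centring $c = \E_{\tilde\pi}[g]$ and bound $\pi(x) \leq \big[\max_y \pi(y)/\tilde\pi(y)\big]\,\tilde\pi(x)$ pointwise, obtaining
$$
\mathrm{Var}_{\pi}(g) \ \leq\ \sum_x \big(g(x) - \E_{\tilde\pi}[g]\big)^2 \pi(x) \ \leq\ \Big[\max_x \frac{\pi(x)}{\tilde\pi(x)}\Big]\,\mathrm{Var}_{\tilde\pi}(g).
$$
Substituting this into the previous display yields the claimed inequality, and the specialisation $\pi = \tilde\pi$ makes the bracketed factor equal to $1$.

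I do not anticipate a genuine obstacle. The two places that need care are, first, correctly justifying that the constrained minimum over $f \perp \mathbf{1}$ equals the unconstrained minimum over all non-constant $f$ via the shift-invariance of $\mathcal{E}$ and $\mathrm{Var}$; and second, getting the \emph{direction} of the variance comparison right --- the centring must be at the $\tilde\pi$-mean rather than the $\pi$-mean, so that the same quadratic $\big(g(x)-\E_{\tilde\pi}[g]\big)^2$ appears under both sums and the pointwise density bound can be applied cleanly.
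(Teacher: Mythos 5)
Your proof is correct: the variational characterisation of the gap, the choice of the $\lambda_1$-eigenfunction of $P$ as test function for $\tilde P$, and the variance comparison via centring at $\E_{\tilde\pi}[g]$ together give exactly the claimed bound. The paper itself offers no proof of this lemma --- it is quoted verbatim as Lemma 13.22 of Levin--Peres--Wilmer --- and your argument is precisely the standard one from that source, so there is nothing to compare beyond noting that you have correctly reconstructed it.
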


The following proposition relates the Dirichlet form to the use of positive semidefinite matrices, in case both stationary distributions are the uniform distribution over $\Omega$. We can then essentially use the above lemma instead of Proposition \ref{prop:eigenvalue_comparison}. We choose to give Proposition \ref{prop:eigenvalue_comparison} as this avoids having to introduce the Dirichlet framework.

\begin{proposition}
Suppose that $\pi$ and $\tilde{\pi}$ are both the uniform distribution over $\Omega$. Then $\tilde{\mathcal{E}}(f,f) \leq \alpha \mathcal{E}(f,f)$ is equivalent to
$$
\alpha(I - P) \succeq (I - \tilde{P}).
$$
%Hence, implies
%$
%\alpha(1 - \lambda_*) \geq (1 - \tilde{\lambda}_*).
%$
\end{proposition}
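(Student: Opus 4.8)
The plan is to unwind both sides of the claimed equivalence into statements about the quadratic form $f \mapsto \langle (\alpha(I-P) - (I - \tilde P))f, f\rangle$ and show they coincide, using the fact that when $\pi = \tilde\pi$ is uniform, the $\pi$-weighted inner product is just $\frac{1}{|\Omega|}$ times the standard Euclidean inner product. First I would observe that $\tilde{\mathcal{E}}(f,f) \le \alpha\,\mathcal{E}(f,f)$ for all $f$ is, by definition of the Dirichlet form, exactly the statement
$$
\langle (I - \tilde P)f, f\rangle_\pi \ \le \ \alpha\,\langle (I - P)f, f\rangle_\pi \qquad \text{for all } f : \Omega \to \R,
$$
which rearranges to $\langle (\alpha(I - P) - (I - \tilde P))f, f\rangle_\pi \ge 0$ for all $f$. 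Since $\pi(x) = 1/|\Omega|$ is constant, $\langle g_1, g_2\rangle_\pi = \tfrac{1}{|\Omega|}\sum_x g_1(x)g_2(x)$, so this nonnegativity is equivalent to nonnegativity in the ordinary inner product, i.e. $\langle Mf, f\rangle \ge 0$ for all $f$, where $M = \alpha(I - P) - (I - \tilde P)$.

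The remaining step is to convert ``$\langle Mf, f\rangle \ge 0$ for all $f$'' into ``$M \succeq 0$''. The key point I must address is that $M$ is symmetric: because $P$ and $\tilde P$ are reversible with respect to the same uniform $\pi$, and reversibility with uniform stationary distribution forces the transition matrices themselves to be symmetric (detailed balance $\pi(x)P(x,y) = \pi(y)P(y,x)$ reduces to $P(x,y) = P(y,x)$ when $\pi$ is constant), both $I - P$ and $I - \tilde P$ are symmetric, hence so is $M$. For a symmetric real matrix, $\langle Mf, f\rangle \ge 0$ for all $f$ is precisely the definition (via the spectral theorem, equivalently via the Rayleigh quotient) of positive semidefiniteness, matching the paper's convention that $A \succeq 0$ means all eigenvalues are nonnegative. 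This gives $M \succeq 0$, i.e. $\alpha(I - P) \succeq (I - \tilde P)$, and each implication is reversible, establishing the equivalence.

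The only genuine subtlety — and the step I would flag as the main obstacle — is the symmetry of $M$, since the matrix ordering $\succeq 0$ in the paper is defined only for symmetric matrices via their eigenvalues. I would make sure to state explicitly that reversibility with respect to the uniform distribution yields symmetric $P$ and $\tilde P$, so that the equivalence between the Rayleigh-quotient condition and the eigenvalue condition is legitimate; without symmetry, ``$\langle Mf,f\rangle \ge 0$ for all real $f$'' would not in general imply that the eigenvalues of $M$ are nonnegative. Everything else is a direct translation of definitions together with the observation that the constant factor $1/|\Omega|$ in the weighted inner product does not affect the sign of the quadratic form.
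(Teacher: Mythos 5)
Your proof is correct and follows essentially the same route as the paper's: translate the Dirichlet-form inequality into a quadratic-form inequality, drop the constant $1/|\Omega|$ factor from the uniform-weighted inner product, and invoke the equivalence between $x^TAx \geq 0$ for all $x$ and $A \succeq 0$. Your explicit remark that reversibility with respect to the uniform distribution forces $P$ and $\tilde{P}$ (hence $M$) to be symmetric is a worthwhile point of rigor that the paper leaves implicit, but it does not change the argument.
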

\begin{proof}

If both stationary distributions are the uniform distribution over $\Omega$, then the condition
\begin{equation}\label{eq:dirichlet}
\tilde{\mathcal{E}}(f,f) \leq \alpha \mathcal{E}(f,f)
\end{equation}
is equivalent to 
$$
f^T(I - \tilde{P})f \leq \alpha f^T(I - P)f
$$
where the function $f$ is interpreted as a vector. This in turn is equivalent to stating that $\alpha(I - P) \succeq (I - \tilde{P})$. This follows from the equivalence that $A \succeq 0$ if and only if $x^TAx \geq 0$ for all real-valued vectors $x$. %The inequality $\alpha(I - P) \succeq (I - \tilde{P})$ is essentially what is being shown in the proof of Theorem \ref{thm:heat_comparison}.
%\section{Comparison in case of binary matrices with two columns (the case $n = 2$)} \label{app:n_is_2}
\end{proof} 

%\section{Parallelism in the sampling of binary  matrices} \label{app:parallel}
\end{document}